%% file: main.tex
\documentclass[letterpaper,twocolumn,10pt]{article}
\usepackage{Style-Files/preamble}

\usepackage{graphicx} 
\usepackage{amsmath}
\usepackage{graphicx}
\usepackage{algorithm}
\usepackage[noend]{algpseudocode}
\usepackage{amsthm}
\usepackage{mathtools}

\newcommand{\natacha}[1]{{\color{green}\grumbler{Natacha}{#1}}}

\newcommand{\hein}[1]{{\color{blue}\grumbler{Hein} {#1}}}

\title{\sys{}: saving BFT through racing}

\newcommand{\inst}[1]{\textsuperscript{\ensuremath{#1}}}

\author{
{\rm Neil Giridharan\inst{\ast} \quad
Shubham Mishra\inst{\ast} \quad
Lorenzo Alvisi\inst{\dagger} \quad
Natacha Crooks\inst{\ast}}\\
{\rm Benjamin Marsh\inst{\ddagger} \quad
Hein Meling\inst{1} \quad
Kartik Nayak\inst{2} \quad
Grzegorz Prusak\inst{\ddagger}}\\
{\small
\inst{\ast}UC Berkeley \qquad
\inst{\dagger}Cornell University \qquad
\inst{\ddagger}Sei Labs \qquad
\inst{1}University of Stavanger \qquad
\inst{2}Duke University \qquad
}
}

\begin{document}
\maketitle
\pagestyle{empty}
\input{Abstract/abstract}
\input{Intro/alternate_intro3}

\input{Background/alternate_background3}
\input{Model/model}
\input{Overview/alternate_overview3}
\input{Protocol/ferrari_attempt2}
\input{Evaluation/eval-cr}
\input{Related-Work/related_work}

\input{Conclusion/conclusion}

\section*{Acknowledgments}
We thank our shepherd, and the
anonymous reviewers for their thorough and insightful
comments. This work was supported in part by the NSF grants CSR-CORE 2106842, NSF CAREER 2442542, a Sui
Research Award as well as gifts from Accenture,
AMD, Anyscale, Google, IBM, Intel, Mohamed Bin Zayed
University of Artificial Intelligence, Samsung SDS, SAP, and VMware.

\bibliographystyle{ACM-Reference-Format}
\bibliography{References/references}

\tr{
\newpage
\appendix
\input{Protocol/retry_protocol_appendix}
\input{Proofs/new-proofs}
\input{Tech-Discussion/tech-discussion}
}

\end{document}

%% file: Abstract/abstract.tex
\begin{abstract} 
Today's practical Byzantine Fault Tolerant (BFT) state machine replication deployments are vulnerable to slowdowns~\cite{visigoth, copilot}. The main culprit is timeouts. Aggressive timeouts spuriously trigger expensive leader changes, while conservative timeouts leave the system idle and let slowdowns severely inflate latency. Two main alternatives exist: hedging~\cite{tennage2023quepaxa, parbft}, which improves recovery from slow leaders but still incurs a time-based hedging delay, and cooperative asynchronous protocols~\cite{vaba,honeybadger}, which recover quickly from slowdowns but suffer from high common-case latency and low throughput. This paper presents \sys: a BFT state machine replication protocol that sidesteps this trade-off through \textit{protocol-rigged races}, where replicas, rather than race against the clock, race against each other by executing protocol steps. This enables \sys{} to achieve high throughput and low latency comparable to state-of-the-art timeout-based BFT, while matching the robustness of cooperative approaches.


\end{abstract}

%% file: Intro/alternate_intro3.tex
\section{Introduction}

This paper presents \sys, a Byzantine Fault Tolerant (BFT) state machine replication (SMR) protocol that achieves low latency, high throughput, and robustness to slowdowns (including failures and network blips).

Distributed trust systems allow multiple mutually distrustful parties to secure data even when a subset of them are compromised or misbehave. These systems have been used for secure key recovery~\cite{signal}, code transparency~\cite{microsoft-ccf}, securing land records~\cite{land-records}, and stablecoins~\cite{digital-euro}. Signal~\cite{signal}, for instance, distributes its key recovery logic across three clouds for better security. At the core of these systems lies a BFT SMR protocol, which implements the abstraction of a totally ordered log. This log can then be materialized into application state that is consistent across all parties, even when a subset crashes or misbehaves. These protocols must \one sustain high throughput, \two ensure low latency, and \three remain robust to actual failures and network events.

Much of the focus on robustness has been on tolerating discrete events: a replica failure~\cite{aardvark} or network events in which messages are dropped~\cite{autobahn}. Much less attention has been given to slowdowns~\cite{visigoth}. Slowdowns are events such as network misconfigurations, partial hardware failures, etc.~\cite{partial-partition, gray-failure, cloudflare-byz} that cause a replica to respond more slowly than normal. While slowdowns of course encapsulate machine crashes or network failures (a crashed node is infinitely slow), they also include events that cause latency spikes over longer time periods, without necessarily triggering timeouts.

These events are of major concern to real distributed system deployments. 
Sei~\cite{marsh2025seigiga}, a blockchain company running the Autobahn~\cite{autobahn} protocol, as well as Etcd and Neo4j (both Raft-based production datastores) report slowdowns of over seven seconds as a result of data synchronization issues, I/O contention, and garbage collection~\cite{marsh2025seigiga,neo4j,delayedheartbeat}.





Although there exist crash-fault-tolerant protocols robust to slowdowns~\cite{copilot}, most  existing state-of-the-art BFT protocols are vulnerable to this issue. 
Even systems like Autobahn~\cite{autobahn}, which explicitly target recovering from network events, are vulnerable. Because they rely on a leader to drive protocol execution, tail latency can spike when that leader is slow.
Techniques such as monitoring leader performance~\cite{aardvark} can detect some slowdowns, but are not comprehensive and offer only partial protection~\cite{copilot}.

This paper argues that \textit{timeout-triggered fault-detection is fundamentally the wrong mechanism for detecting slowdowns in an SMR protocol}, for two reasons.
First, it is a \textit{destructive} process. If timeouts are configured too aggressively, the system might unfairly accuse a leader of being down, unnecessarily triggering an expensive recovery process to elect a new leader. 
Moreover, trying to elect a new leader while the old leader is still alive can lead to concurrent proposals from both replicas. These proposals will interfere with each other, precluding either from making progress. Second, timeout-triggered fault detection is a \textit{blocking} process. Setting the timeout too conservatively causes the system to wait idle for long periods of time prior to detecting a failure. 
This is especially problematic as production deployments today recommend setting timeouts to at least ten times the network round-trip time \cite{kubstorage}. Timeouts are thus almost always configured to be on the order of seconds for geo-distributed deployments, with some production deployment choosing upwards of 30 seconds~\cite{baudet2019state,mystenprivate}! As a result, slowdowns often remain undetected, while truly failed leaders cause the system to stall for seconds~\cite{autobahn}.

To gracefully handle slowdowns, we need a new fault-detection mechanism that is instead \textit{guaranteed to make forward progress}. It should 1) be \textit{cooperative} rather than \textit{destructive}: it should not destructively interfere with ongoing proposal invocations from existing leaders; and 2) be \textit{productive}: it should contribute useful work towards committing operations.
Together, these properties minimize the system's recovery time when there is a slow or failed leader.

Hedging~\cite{tennage2023quepaxa} recognizes the issues of timeout-based approaches and offers a promising avenue for better handling of slowdown events. It replaces the traditional competitive leader-based proposal process with a novel cooperative multi-proposer approach. Each replica proposes in a staggered fashion, according to a known delay schedule.  Concurrent proposals contribute to faster termination when there are slowdowns or faults.  Consequently, hedging delays can be set much smaller than timeouts. Hedging, unfortunately, still relies on time-based delays. If the first proposer (leader) is slow, then the next replica must stall for the full hedging delay before it can propose. As such, hedging remains a blocking process that is not productive: stalling  does not contribute to the protocol's forward progress. 

Fundamentally, timeout-based approaches detect slowdowns by having the leader \textit{race against the clock}. The leader is given a fixed deadline to finish its work. If it fails to finish before that deadline, it loses the race and is deemed faulty or slow. Hedging, on the other hand, replaces this fixed deadline with a \textit{rigged race}. Instead of racing against the clock, the leader races against other replicas that are also proposing. Time is used only to give the leader a head start, ensuring that when the leader is not slow, it comfortably wins the race.


In this work, we argue that \textit{rigged racing} is indeed the correct mechanism for handling slowdowns.  But racing against the clock or using time to rig the race 1) causes replicas to wait idle 2) requires precise knowledge of what value to set the clock to. Replicas should instead \textit{race against each other}, using protocol messages to rig the race. We purposely design the leader's proposing protocol to take fewer steps than the other replicas' proposing protocol, so that the leader finishes proposing first in the absence of slowdowns. Unlike time-based rigging, non-leader replicas stay productive as they do work that can be used to commit an operation \textit{if they happen to win the race}. 

To instantiate this idea, we propose \sys{}, a new consensus protocol that achieves low latency, high throughput, and better robustness to slowdowns through \textit{protocol rigged races}. In the absence of a slowdown, the leader always wins the race, matching the latency of traditional leader-based BFT. In the presence of slowdowns, the leader will lose the race, but the other replicas will quickly recover using the work they did during the race.

Naturally, realizing this in practice requires addressing two key challenges: \one replicas may disagree on whether the leader won the race, and \two replicas may commit at different stages, with some committing the leader's value during the race and others committing only after the race's dust settles. 

Our results are promising. \sys{} matches the common-case performance of Autobahn~\cite{autobahn}, a state-of-the-art timeout-based BFT protocol, while outperforming it under slowdowns, achieving 1.6–3.0x lower peak latency for 1–2 second slowdowns and up to 10.8x lower latency for more severe cases. Compared to ParBFT2~\cite{parbft}, a leading hedging-based BFT protocol, \sys{} delivers 1.3x higher throughput and 1.9x lower common-case latency, along with 1.7–3.1x lower peak latency during slowdowns. 


This paper makes three core contributions:
\begin{itemize}[topsep=0pt]
        \item We identify the two necessary conditions for slowdown-detection mechanisms to be precise and efficient: being cooperative and productive. 
        \item We introduce a new slowdown-detection mechanism, protocol-rigged racing, the first to satisfy both conditions.
        \item We implement and evaluate \sys{}, a novel BFT protocol that uses protocol-rigged racing to offer high throughput, low latency, and robustness to slowdowns. \sys{} is deployed in production at a leading distributed trust company. 
\end{itemize}

%% file: Background/alternate_background3.tex
\section{Background}

We first survey the slowdowns observed in real deployments, and then examine the mechanisms used to detect them.

\subsection{Slowdowns in practice}

In production, distributed systems often experience \emph{slowdowns}: periods when a replica remains correct and available, yet its service rate falls below its usual baseline. Even brief slowdowns can cause end-to-end latency spikes and trigger timeouts.

Slowdowns arise from a mix of hardware, software, and network effects. Fail-slow behavior is prevalent across the hardware stack: disks may suffer throughput degradation from vibrations, memory cards from loose connections, CPUs from power loss. Even NICs can drop packets because of  firmware or driver bugs~\cite{fail-slow-scale, nvm-fail-slow}. Software slowdowns arise because of I/O contention, slow data syncs, or garbage collection pauses. I/O contention in etcd, for instance, can induce long \texttt{fsync} latencies~\cite{delayedheartbeat}; since proposals must be persisted to disk, these stalls directly translate into large latency spikes. Data synchronization likewise causes sufficiently long latency spikes in HBase that developers have added a dedicated feature to detect slow syncs~\cite{hbaseslowsync}. Sui~\cite{sui}, a production blockchain system, similarly observes that slow data synchronization can trigger slowdowns lasting up to 5 seconds~\cite{mystenprivate}. Elasticsearch and Neo4j both report that rapid heap allocation can trigger stop-the-world garbage collection pauses lasting seconds to minutes, inadvertently causing leader election timeouts~\cite{elasticsearchgc, neo4j}. Finally, network pathologies, including misconfigurations and flaky links have been shown to cause partial network partitions in several production data stores~\cite{towardgeneric}.

\subsection{Slowdown detection mechanisms}
Slowdowns are managed in a variety of ways today.


\par \textbf{Timeouts.} The most common way to detect slowdowns is to use timeouts. Timeouts effectively make the leader race against the clock to finish committing its proposal. If the leader loses, the timeout fires, triggering a leader election process to replace it.

Tuning timeouts is notoriously difficult~\cite{tennage2023quepaxa}. It requires significant administrative effort, and a poorly chosen value can degrade performance or compromise availability. Aggressive timeouts risk frequent leader elections, in which new leaders override the progress of old leaders, hurting liveness. Conservative timeouts give leaders plenty of time to win the race, but cause the system to stall in the process. This is especially problematic when the leader fails, as other replicas must wait out the full timeout to detect that failure. Moreover, timeouts are typically set statically, which makes them adapt poorly to changing conditions, such as network latency fluctuations or replicas that become gradually slower over time. Taken together, these limitations make clear that timeouts are an overly coarse heuristic for detecting slowdowns.


Production systems recommend setting timeouts to be at least ten times the network RTT, though in practice they are even more conservative. Default timeout values include 2 seconds for Sei~\cite{marsh2025seigiga}, 5~s for Microsoft CCF~\cite{ccf-timeout}, 7~s for Neo4j~\cite{neo4j}, and 30~s for Diem~\cite{diem2021,mystenprivate}.

\par \textbf{Cooperative Proposing.} Timeout-based protocols make progress only if the leader ``beats the clock". When it cannot, the system stalls until the replicas elect a new leader. Effectively, these protocols are built around the assumption that, in the typical case, the leader usually wins this race~\cite{aardvark, good-case}.

Asynchronous consensus protocols~\cite{honeybadger,vaba,2pac,liu2023flexible} instead start from a different premise: \textit{all} leaders may consistently lose the race against the clock. This can be a result of poorly configured timeouts or because the network is controlled by an adversary that selectively delays the leader's messages. Thus, to avoid blocking on a single leader, asynchronous protocols let all replicas propose cooperatively. During slowdowns, this approach achieves low latency and remains robust up to $f$ slow replicas, since $n-f$ correct replicas are guaranteed to eventually respond. However, this approach is more expensive in the common case, since extra coordination is required to handle multiple proposers. Asynchronous protocols thus have significantly higher latency in the absence of slowdowns. 

\par \textbf{Hedging.} The most recent slowdown detection mechanism, hedging~\cite{tennage2023quepaxa,parbft}, takes a middle-ground between the timeout and cooperative proposing approaches. With hedging, all replicas cooperatively propose (like in asynchronous protocols), but they do so in a staggered fashion according to a known delay schedule. This creates a race between the leader (the first proposer) and the remaining replicas to finish proposing. If the leader wins the race, the latency matches that of traditional leader-based BFT. Otherwise, the other proposers cooperatively work together to terminate the protocol. Hedging preserves the efficiency of timeout-based protocols in the absence of slowdowns. Unfortunately, during a slowdown, recovery is unnecessarily delayed until after the leader's head start. Furthermore, extra coordination is needed to reconcile the leader-based and cooperative mechanisms. During slowdowns, ParBFT2~\cite{parbft}, the state-of-the-art hedging BFT protocol, has a consensus latency of $22$ message delays on top of the hedging delay, compared to only $10$ message delays for SMVBA~\cite{speeding-dumbo}, the state-of-the-art cooperative BFT protocol.

\par \textbf{Toward Rigged Racing.} 
In short, current detection mechanisms are inadequate. Timeouts are too slow to react to slowdowns, as the system stalls until the timeout triggers. Cooperative approaches are not practical, as they unnecessarily pay the cost of higher latency always. Hedging, while an improvement, still stalls until the hedging delay is over.

A better slowdown detection mechanism should meet three requirements. First, it must be \textit{cooperative}: concurrent proposals should not destructively interfere with each other. Second, it must be \textit{productive}: the work done during detection should directly accelerate commitment once the slowdown is detected. Third, it must detect real slowdowns quickly without mislabeling a healthy replica as slow. 

These requirements rule out timeouts and hedging, since both rely on time-based delays that fundamentally do not contribute to useful protocol progress. A more promising direction instead quantifies slowness by measuring how quickly replicas advance through protocol steps that replicas must anyway execute. Truly slow replicas will consistently lag behind. However, raw relative progress is noisy: routine jitter can briefly make a healthy replica look slow. To satisfy the third requirement, we need a mechanism that separates genuine delay from benign variation. 
Our solution, \textit{protocol-rigged racing}, meets all three requirements by racing replicas while carefully adding sufficient bias to prevent false accusations.

%% file: Model/model.tex
\section{System Model}

\sys{} adopts the assumptions of prior BFT protocols. We assume $n=3f+1$ replicas, of which at most $f$ are faulty, as well as a PKI for digital signatures, and a trusted setup for threshold signatures. We consider a replica correct if it adheres to the protocol specification; a replica that deviates is considered faulty (or Byzantine). We make no assumptions about the number of faulty clients. We assume the existence of a strong, yet static adversary, that can corrupt and coordinate all faulty participants’ actions but cannot break standard cryptographic primitives. Replicas communicate through reliable, authenticated, point-to-point channels; we write $\langle M \rangle_p$ for messages signed by replica $p$. We assume all signatures and quorums are validated and omit the check in protocol descriptions.
\sys{} operates under the asynchronous network model: it makes
no synchrony assumptions for safety or liveness. 


%% file: Overview/alternate_overview3.tex
\section{\sys: Overview}

\sys{} seeks to achieve two goals: \one match the high performance of timeout-based BFT protocols in the absence of slowdowns, and \two provide stronger robustness to slowdowns than both hedging-based and timeout-based BFT protocols.

Timeout-based BFT protocols enjoy low latency by relying heavily on a designated leader: they assume the leader will beat the clock most of the time. When that happens, the leader becomes the sole replica authorized to propose new operations, and agreement can be reached quickly, with only three message delays~\cite{castro1999pbft}. Unfortunately, these protocols pay a steep price when the leader loses the race. Hedging-based protocols improve robustness by having all replicas propose concurrently when the leader falls behind, but they still pay both the hedging delay and additional coordination needed to reconcile all proposals. \sys{} overcomes these limitations and achieves both objectives through a new technique called \textit{protocol-rigged racing}.


In \sys{}, a designated leader drives commitment of its proposal in its \textit{proposal lane}. Additionally, each replica has a proposal lane of its own, where it proposes operations and votes in the lanes of others.

\sys{} consists of two components: a race, in which replicas try to beat the leader (but assume that they won't), and a recovery path in which the replicas try to commit their own proposal if they believe that the leader lost.
During the race, the leader runs a faster protocol that ensures that it will win if it is not slow. If the leader loses, replicas do not have to turn back and start over. Instead, the steps taken to beat the leader are directly useful towards committing.




\par \textbf{Race.}  Replicas race with the leader by submitting their own proposals. Rather than having replicas perform meaningless busy work during the race, \sys{}'s key insight is to have them execute required protocol operations during this time, shortening the amount of work left post-race.
\textit{All} consensus protocols fundamentally consist of two phases: a \textit{non-equivocation} phase and a \textit{persistence} phase. Non-equivocation guarantees that malicious replicas cannot cause conflicting operations to commit; persistence ensures committed operations survive. \sys{} uses the non-equivocation phase as the race itself! Both leaders and replicas need it \textit{anyways} to commit. 

To ensure a non-slow leader will win the race, the race is rigged in the leader's favor in two ways. First, the leader completes the two-step, quadratic message exchange of PBFT's non-equivocation phase while the other replicas instead perform a slower three-way linear message exchange. Second, the leader does not have to finish first to win; it only has to beat the cutoff. This cutoff is chosen to maximize the leader's chances of winning, while still allowing replicas to recover quickly if the leader is slow.
When the cutoff is reached locally, the race outcome is determined. The key challenge is that replicas may disagree on the race outcome. One replica might observe the leader winning, while another sees the leader losing. Despite this divergence, \sys{} ensures agreement across all participants.


\par \textbf{Recovery Path.} If a replica failed to commit the leader's proposal, it moves to the recovery path. Unfortunately, there is no longer a dedicated leader from which to choose a proposal. Instead, all replicas potentially have formed a distinct non-equivocation certificate for their lane. To make matters worse, the leader lane could have committed with only some replicas knowing. The recovery path addresses each issue in turn and proceeds in three steps. First, a replica chooses a value to recover for its lane. Second, it persists this value. And third, a lane is randomly selected to be committed.


\par \textbf{Step 1.} Replicas must recover any potential proposal that \textit{could} have committed from the leader. If a leader proposal could have committed, replicas must select that value as input to the recovery path. If no leader proposal was committed, replicas recover their proposal from the race to ensure that the progress made during the race was productive.

\par \textbf{Step 2.} Replicas then persist the recovered value, so that it is safe to commit it. This step has two parts. The first is an optional \textit{race exclusion phase} that addresses a subtle issue. If a leader non-equivocation certificate forms but the leader does not successfully commit, a malicious replica may successfully obtain both a non-equivocation certificate from the leader and for its own lane. It cannot equivocate on any proposal, but it still can equivocate between the leader's proposal and its own lane proposal. 
Thankfully, we find that, in most cases, the leader is slow enough that replicas can directly generate a proof that the leader's non-equivocation certificate \textit{cannot exist}. Most of the time, replicas thus safely omit this phase. Only when they cannot generate this proof must they proceed to run this additional exclusion phase to guarantee uniqueness between the leader's non-equivocation certificate and the replica's. The second part is the \textit{persistence phase}, which performs the same role as in traditional BFT. It guarantees that if the lane proposal is committed, then it survives despite failures.

\par \textbf{Step 3.} The third and final step is that a lane is \textit{randomly} selected for commitment. Note that this election process is run \textit{after} each replica has completed all the steps required to commit a proposal in its lane. This is by design: if the winner were chosen deterministically, a network adversary, could intentionally delay messages necessary for the anointed winner to commit the operation, defeating the point of not having a leader. If the randomly selected lane does not have a proposal ready for commit, the protocol simply retries the recovery path. 

\par \textbf{Performance Summary} Using these techniques, \sys{} achieves a latency of three message delays under normal conditions, matching the performance of latency-optimal PBFT~\cite{castro1999pbft}. When the leader experiences a slowdown, the expected latency is comparable with state-of-the-art pessimistic (asynchronous) protocols~\cite{speeding-dumbo, vaba}: $9.5$ message delays if the leader fails to propose during the race, and $10.5$ message delays otherwise. These results meet our goal of delivering state-of-the-art latency in both normal and slowdown scenarios.


%% file: Protocol/ferrari_attempt2.tex
\section{\sys{}: The Protocol}
We now describe the protocol in more detail. We first focus on how the protocol works for a single slot, before expanding it to support efficient agreement across multiple slots.
For ease of exposition, we omit the standard checks performed by all replicas---{\em i.e.}, whether messages are well-formed and from the correct view, and signatures are valid.
Replica state is summarized in Algorithm~\ref{alg:consensus-state}.
We defer formal proofs of correctness to \tr{Appendix \S\ref{s:proofs} (Theorems~\ref{thm:safety} and~\ref{thm:liveness})}{our supplementary material}.

Execution in Ambulance is organized into a sequence of views. Within a view, replicas try to commit a proposal. The protocol consists of two components: a \emph{race}, in which replicas determine whether the leader is fast enough to commit, and a \emph{recovery path}, in which they try to commit their own proposal if the leader is deemed slow. As previously stated, each replica in \sys{} has its own proposal lane. Replicas drive commitment for their own lane, but additionally vote in other replicas' proposal lanes. Note that the leader runs both a leader lane and a replica lane. 

\begin{figure*}[h!]
\centering
\begin{minipage}{.5\textwidth}
  \centering
  \includegraphics[width=\linewidth]{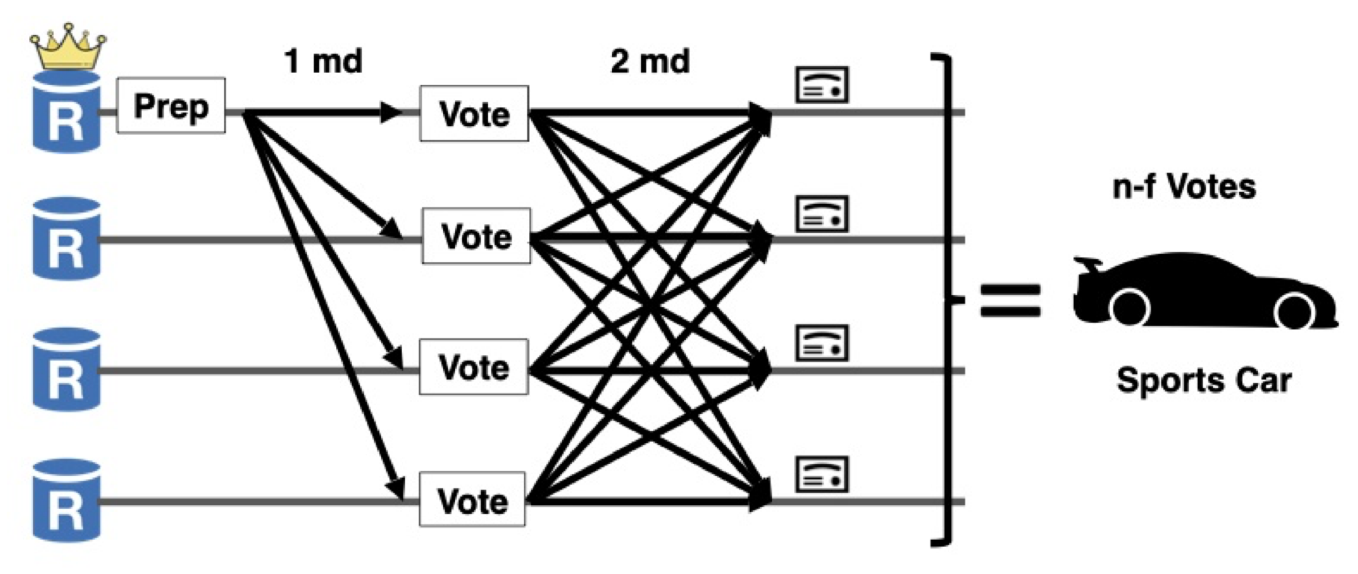}
  \caption{Leader lane protocol pattern.}
  \label{fig:leader-pattern}
\end{minipage}%
\begin{minipage}{.5\textwidth}
  \centering
  \includegraphics[width=\linewidth]{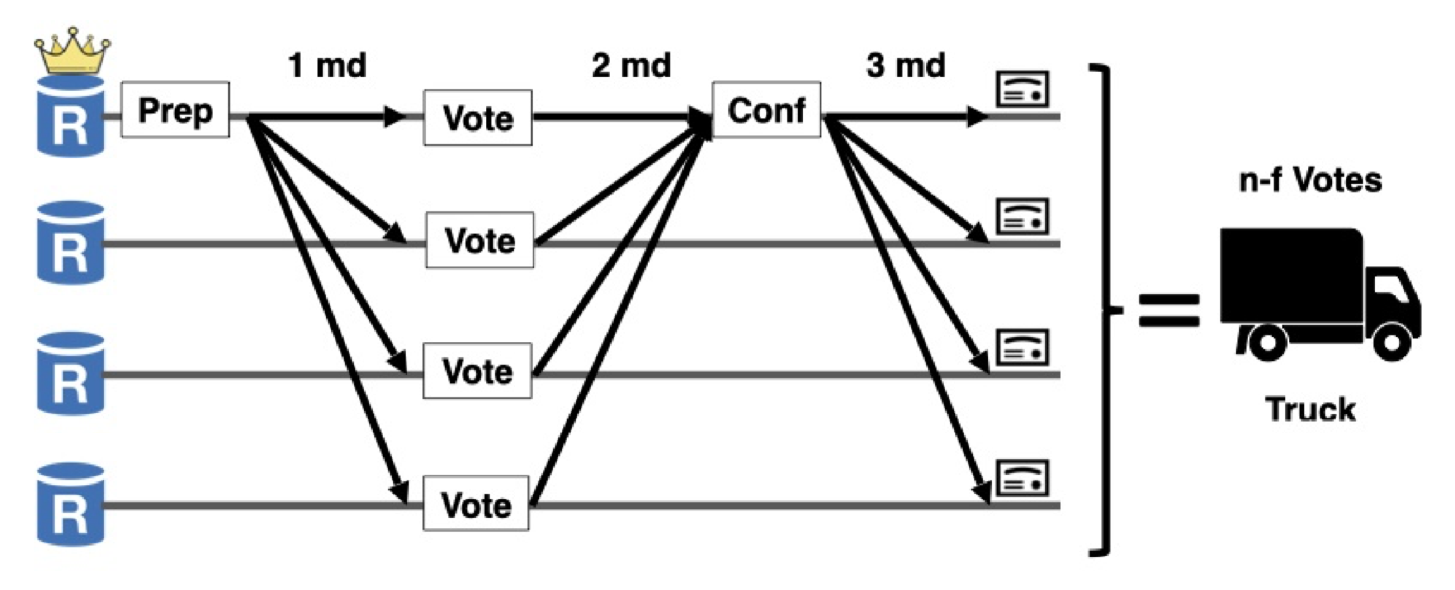}
  \caption{Replica lane protocol pattern.}
  \label{fig:replica-pattern}
\end{minipage}
\vskip -5pt
\end{figure*}

\vskip -0pt
\begin{algorithm}
\caption{Replica state cheat sheet for a given slot $s$.}\label{alg:consensus-state}
\begin{algorithmic}[1]
\State{\textit{v}}\hfill$\triangleright$ replica's current view
\State{\leaderpropname}\hfill$\triangleright$ \fastprepare from the leader or \noleaderprop{}
\State{\lockcert}\hfill$\triangleright$ \lockcertname or \noleaderlock{}
\State{\slowcert}\hfill$\triangleright$ \slowcertname
\State{\ulostcert}\hfill$\triangleright$ \ulostcertname from race
\end{algorithmic}
\end{algorithm}
\vskip -0pt

\subsection{The Race}\label{sec:fast-path}
The protocol begins with a race whose purpose is two-fold: \one correctly and quickly identify whether the leader is slow and \two allow a non-slow leader to commit.

To achieve this, the leader and replicas simultaneously issue proposals in their own lanes and race to form non-equivocation certificates for their lane. Such certificates guarantee that, in any view, only a single value can be proposed in the corresponding lane.
We call the leader's non-equivocation certificate a \emph{\lockcertname}, and a replica's non-equivocation certificate a \emph{\slowcertname}\footnote{Sports cars are optimized for speed but are fragile, whereas trucks are slower but more reliable.}. Non-equivocation is a necessary step in essentially all BFT protocols, so replicas make progress toward commitment while participating in the race.


We ensure that leaders progressing normally are not accidentally identified as  slow by rigging the race in the leader's favor in two ways. First, the leader runs a faster non-equivocation procedure that produces a \lockcertname in only two message delays, whereas each replica lane requires three message delays to produce a \slowcertname. This structural asymmetry naturally biases the race toward a non-slow leader. Second, the leader does not need to finish first overall to avoid being detected as slow. It only needs to finish prior to a \emph{cutoff}, which occurs when a quorum of \slowcertnameplural has formed. If a \lockcertname forms prior to the cutoff, the leader wins the race and replicas continue toward committing the leader's value. Otherwise, the leader loses the race and replicas proceed to the recovery path.

\subsubsection{Leader Lane: \lockcertnamecaps}

\sys{} structures the leader lane to guarantee that a non-slow leader will beat the cutoff while still making useful progress toward commitment. The leader lane implements the standard non-equivocation step required for commitment with a two-step all-to-all communication pattern, enabling a non-slow leader to produce a \emph{\lockcertname} in only two message delays (illustrated in Figure~\ref{fig:leader-pattern}).

\protocol{\textbf{1: L} $\rightarrow$ \textbf{R}: Leader $L$ sends \fastprepare to all.}

The leader broadcasts a \fastprepare $\coloneqq \langle v, val \rangle_L$ message, where $v$ is the view number and $val$ is the value proposed by $L$.
\par \protocol{\textbf{2: R} $\rightarrow$ \textbf{R}: Replica $R$ processes \fastprepare and sends vote to all replicas.}
\vspace{1pt}
On receiving a \fastprepare message $prep$ from the leader, $R$ checks that $prep$ is valid and stores it in \leaderpropname $= prep$.
It then broadcasts a \fastprepvote $\coloneqq \langle v, dig=h(prep.val) \rangle_R$ message to all replicas, where $h$ is a cryptographic hash function.

\protocol{\textbf{3: R}: Replica $R$ assembles a \lockcertname.}

$R$ waits for at least $n-f=2f+1$ matching \fastprepvote{} messages and aggregates them into a \textit{\lockcertname}. It sets \lockcert $\coloneqq (v, dig, \{$\fastprepvote$\})$.

This certificate has two purposes. First, it guarantees non-equivocation---the uniqueness of the leader’s proposal within a view---since correct replicas vote only once and  any two quorums of size $n-f$  intersect in at least one correct replica. Second, as we are about to see, it serves to determine whether the leader beat the cutoff. $R$ stores this certificate in $\lockcert$ for potential use during recovery.


\subsubsection{Replica Lanes: \slowcertnamecaps}

Each replica lane similarly produces a non-equivocation certificate, called a \emph{\slowcertname}.
Unlike the leader lane, a replica lane uses a three-step linear communication pattern, so \slowcertnameplural form more slowly than a \lockcertname by design (illustrated in Figure~\ref{fig:replica-pattern}).

We describe the protocol from the perspective of a single proposer replica $P$, which drives one such lane to completion.
In practice, all replicas concurrently act as proposers in their own lanes, and they all contribute to each other's lanes. 

\protocol{\textbf{1: P} $\rightarrow$ \textbf{R}: Replica $P$ sends \slowprepare to all replicas.}

A proposer $P$ broadcasts a message \slowprepare $\coloneqq \langle v, val\rangle_P$, where $v$ is the view number and $val$ is the value proposed by $P$.

\par \protocol{\textbf{2: R} $\rightarrow$ \textbf{P}: Replica $R$ processes \slowprepare and sends vote to $P$.}
Upon receiving a \slowprepare message $prep$ from $P$, $R$ verifies its validity and, if valid, replies to $P$ with a
\slowprepvote $\coloneqq \langle v, dig=h(prep.val) \rangle_R$ message.

\protocol{\textbf{3a: P}: Replica $P$ assembles a \slowcertname.}

$P$ aggregates $n-f$ distinct matching \slowprepvote messages into a \slowcertname, and sets \slowcert $\coloneqq (v, dig, \{$\slowprepvote$\})$.

\protocol{\textbf{3b: P} $\rightarrow$ \textbf{R}: Replica $P$ broadcasts \slowcommit.}

After forming a \slowcertname for that lane, $P$ sends a
\slowcommit $\coloneqq \langle v, \slowcert \rangle_P$
message to all replicas.

\subsubsection{Race Cutoff}
The race ends locally at a replica once it has received $n-f$ \slowcertnameplural. This cutoff is chosen to give the leader as much room as possible to win the race.
Since up to $f$ replicas may be faulty, no correct replica can wait for more than $n-f$ \slowcertnameplural. At the same time, setting the cutoff below $n-f$ would shorten the race unnecessarily, increasing the chance that the leader is falsely detected as slow even when it is not. Once the cutoff is reached, the replica either continues toward committing the leader's proposal, if it observed the leader win the race, or enters the recovery path, if it observed the leader lose.

\protocol{\textbf{1: R:} Replica $R$ assembles a \ulostcertname.}

Upon receiving $n-f$ matching valid \slowcommit messages from distinct replicas for this proposal lane, $R$ assembles a
\ulostcert $\coloneqq \{$\slowcommit$\}$ certificate. This certifies that the $(n-f)$th certificate has formed and hence cutoff is reached.

After the cutoff, each replica is in one of two cases: either \one it received a \lockcertname prior to the cutoff (Figure~\ref{fig:beat}) or \two the leader missed the cutoff (Figure~\ref{fig:miss}).
As replicas can receive certificates in a different order, their belief about who won the race may diverge. The \lockcertname forming prior to cutoff indicates that the leader is making progress at the expected speed. If instead a cutoff forms first, the leader is behaving unexpectedly slowly. We describe each case in turn.

\par \textbf{Case 1: The leader beat the cutoff.}
Suppose replica $R$ receives the \lockcertname before $n-f$ \slowcertnameplural. Then $R$ concludes locally that the leader is behaving as expected, and therefore its proposal should be committed. The protocol then moves toward committing the leader's value.

\protocol{\textbf{1: R $\rightarrow$ R}: Replica $R$ broadcasts \fastcommit.}

$R$ broadcasts
\fastcommit $\coloneqq \langle v, dig \rangle_R$.
The \fastcommit message serves two purposes:
\one it tells other replicas that $R$ locally observed the leader beating the cutoff and that the leader's proposal should therefore be committed; and
\two it ensures that $R$ has persisted the leader's proposal for use in recovery, in case recovery begins before $R$ commits the leader proposal. This is equivalent to the commit phase of traditional BFT protocols.

\protocol{\textbf{2: R $\rightarrow$ R}: Replica $R$ assembles and forwards a \leadercommitname.}

Replica $R$ aggregates $n-f$ matching \fastcommit messages into a \leadercommitname. It then locally commits the leader's value and forwards the commit certificate to all replicas so that they can commit as well. Because the leader lane has a single designated leader, a quorum of \fastcommit messages suffices to commit the leader's proposal. Conflicting leader proposals cannot arise in the same view because the \slowcertname guarantees non-equivocation. Note that even if $R$ saw the leader missing the cutoff locally, it can still commit the leader's proposal if it receives a quorum of \fastcommit messages from other replicas.

\par \textbf{Case 2: The leader missed the cutoff.}
Suppose replica $R$ observes $n-f$ \slowcertnameplural before the \lockcertname, then it concludes the leader is slow. It therefore abandons participation in the leader lane and stops sending \fastprepvote{} or \fastcommit{} messages. However, it continues processing \fastcommit{} messages from other replicas, since those messages may still certify that the leader's proposal has committed.


\begin{figure*}[h!]
\centering
\begin{minipage}{.5\textwidth}
  \centering
  \includegraphics[width=\linewidth]{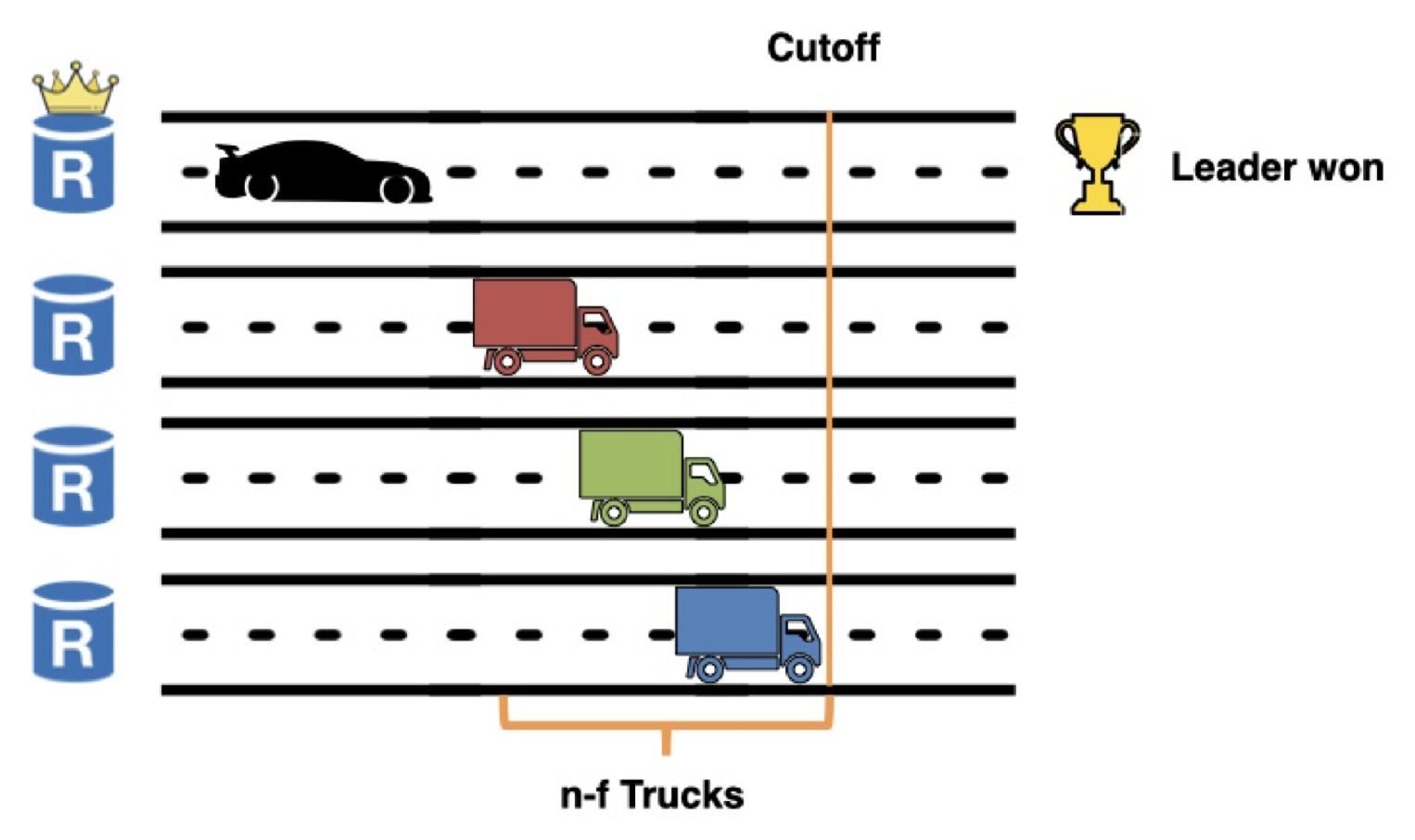}
  \caption{Leader beating the cutoff.}
  \label{fig:beat}
\end{minipage}%
\begin{minipage}{.5\textwidth}
  \centering
  \includegraphics[width=\linewidth]{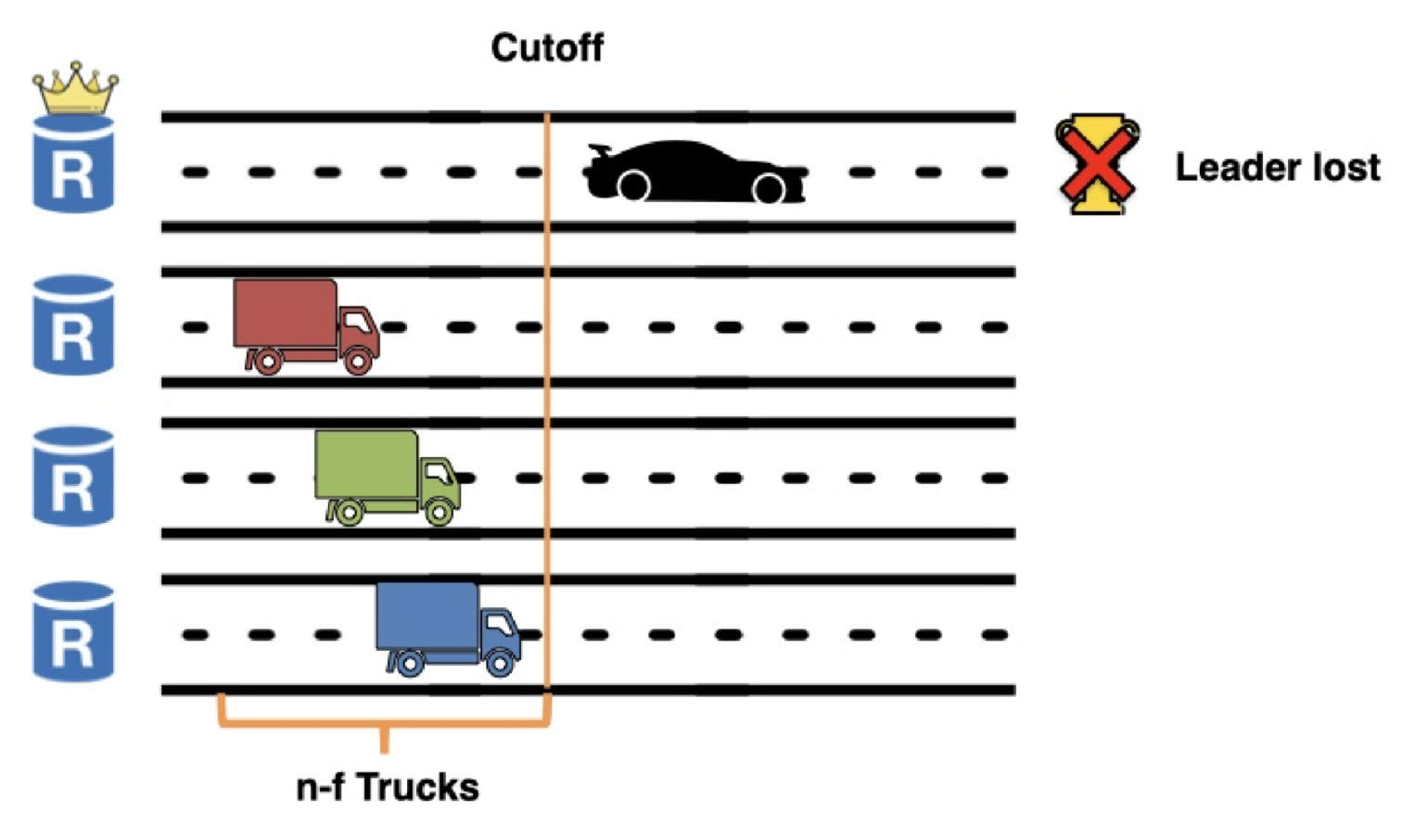}
  \caption{Leader missing the cutoff.}
  \label{fig:miss}
\end{minipage}
\vskip -5pt
\end{figure*}

\subsection{Recovery Path}
Once a replica locally reaches the cutoff, it enters the recovery path. The goal of the recovery path is to guarantee commitment of {\em some} proposal across all replicas. This may be the leader's proposal if some replica committed it during the race, or it may be some other replica's proposal.


In traditional BFT protocols, recovery proceeds in three steps: \one replicas elect a new leader, \two that leader chooses a value to recover, and \three the protocol then runs the usual non-equivocation and persistence phases for that value. However, this ordering of steps is problematic in our setting, where liveness does not rely on any synchrony assumptions. 
If the new leader is identified at the start of recovery, a network adversary can repeatedly target that replica and prevent progress. To avoid this vulnerability, we reverse the usual order of steps. Rather than electing a leader first, every replica initially acts as a potential leader by advancing its own recovery value through the consensus steps needed for commitment, namely non-equivocation and persistence. Leader election is deferred until a quorum has completed these steps, so the adversary learns whom to target only at the very end, when it is too late to interfere. This makes it likely that the election selects a replica from that quorum, at which point the protocol can terminate because that value has already been persisted. 

\vskip -0pt
\begin{algorithm}
\caption{Replica state cheat sheet for recovery path.}\label{alg:consensus-state-recovery}
\begin{algorithmic}[1]

\State{\certdata}\hfill$\triangleright$ \elected \preptwocert or \noelectcert{}
\State{\textit{qcs}}\hfill$\triangleright$ map: replica ID  $\rightarrow$ recovery path certificate

\State{\textit{finish}}\hfill$\triangleright$ set of \textsc{Finish} messages received

\end{algorithmic}
\end{algorithm} 
\vskip -0pt

\subsubsection{Step 1: Choosing a value to recover per lane}
In Step 1, each replica selects, for its lane, a value to recover based on the evidence gathered during the race. The key safety requirement is that if some value was already committed (via the \lockcertname), then replicas must recover that value.

Replicas first share with each other any messages they received from the leader during the race. This evidence will be used to determine what proposal should be recovered.

\protocol{\textbf{1: R} $\rightarrow$ \textbf{R}: Replica $R$ broadcasts \textsc{Status}.}
The replica broadcasts all state received from the leader during the race, including any proposal received from the leader ($\leaderpropname$). If a replica did not receive a \leaderpropname it sets $\leaderpropname=\noleaderprop{_R}$. Similarly, it sends any \lockcertname it received. If a replica did not receive a \lockcertname it sets $\lockcert=\noleaderlock{_R}$.

$R$ broadcasts this information to all other replicas as part of a \status $\coloneqq \langle \textit{\leaderpropname}, \textit{\lockcert} \rangle_R$ message.

\protocol{\textbf{2: } $P$ receives quorum of \status messages}

A replica $P$ selects a value to recover for its lane based on a quorum of $n-f$ \status messages. $P$ considers three cases: 

\protocol{\textbf{Case 1}: $P$ receives a \lockcertname.}
If a \lockcertname exists as part of any received \status message, then some replica may have formed a \leadercommitname (follows from quorum intersection). 
To preserve safety, $P$ therefore sets its recovery input to the leader’s value:
$\textit{r-input} = \lockcert$.

The next two cases handle the situation where $P$ does not receive a \lockcertname. In both cases, $P$ chooses the same recovery input; however, in the latter case $P$ can safely skip the \recoveryprepphase{} phase.

\protocol{\textbf{Case 2a}: $P$ receives $n-f$ \noleaderlock{} and a \fastprepare.}
Receiving $n-f$ \noleaderlock{} messages definitively indicates that no correct replica committed on the leader’s lane. This also follows from quorum intersection: if a \leadercommitname existed, $P$ would be guaranteed to receive a \lockcert, which it did not. In principle, $P$ could now safely recover any value, since no leader commit occurred. However, we require $P$ to recover its own \slowcertname from the race, thereby preserving the progress already achieved on its lane. 

In this case, $P$ sets its recovery input to $\textit{r-input} = \slowcert$, aggregates the $n-f$ signatures on \noleaderlock{} into a \nocommitcert{} $\textit{nc-cert}$, which proves that a leader commit could not have occurred, and proceeds to the next step.


\protocol{\textbf{Case 2b}: $P$ receives $n-f$ \noleaderlock{} and $n-f$ \noleaderprop{}.}
Receiving $n-f$ \noleaderprop{} messages definitively indicates that no \lockcertname exists. This follows from quorum intersection: if a \lockcertname existed, $P$ would be guaranteed to receive a corresponding \fastprepare, which it did not. Since there is no possible \lockcertname to equivocate with, $P$ can safely skip the \recoveryprepphase{} phase (see section~\ref{s:race-exclusion}): 
its own racing \slowcertname (\slowcert) already ensures non-equivocation within its lane. $P$ aggregates the $n-f$ signatures on \noleaderprop{} into a \nolockcert{} \nlcert, which proves that no \lockcertname exists, and sets its recovery input to
$\textit{r-input} = (\slowcert, \nlcert)$.

\subsubsection{Step 2: Persisting the recovered value per lane}
In Step 2, each replica persists its recovered value so that its value can be safely committed in Step 3. 
This step has two parts: the aforementioned optional \recoveryprepphase{} phase followed by a persistence phase that mirrors the familiar two-phase structure of traditional BFT. The optional \recoveryprepphase{} phase is needed because, although the racing path already guarantees non-equivocation within each lane, a Byzantine replica could still try to equivocate between a possible \lockcertname and its own \slowcertname. 
The \recoveryprepphase{} phase prevents this. The persistence phase then plays the usual role from traditional BFT: it records the recovered value so that subsequent protocol steps treat it as the lane's durable candidate.

\par \textbf{\recoveryprepphasecaps{} Phase.}\label{s:race-exclusion}
The \recoveryprepphase{} phase prevents a Byzantine replica from equivocating across lanes between its own \slowcertname and a possible \lockcertname. By the end of this phase, at most one of these values can advance in $P$’s lane: either $P$’s \slowcertname or the \lockcertname, but not both.

\protocol{\textbf{1: P} $\rightarrow$ \textbf{R}: Replica $P$ broadcasts \recoveryprepare.}
$P$ broadcasts \recoveryprepare $\coloneqq \langle v,\textit{r-input},\nccert \rangle_{P}$ where $v$ is the view, $\textit{r-input}$ is the selected recovery input (from step 1), and \nccert is the \nocommitcert{} formed from case 2a if it exists. 

\protocol{\textbf{2: R} $\rightarrow$ \textbf{P}: Replica $R$ processes \recoveryprepare and votes.}
$R$ checks whether the received \recoveryprepare is valid.
%
If so, $R$ replies by sending \recoveryprepvote $\coloneqq \langle v, dig=h(\textit{r-input}) \rangle_R$ back to the proposer.

\protocol{\textbf{3: P:} Replica $P$ assembles a quorum of \recoveryprepvote.}
$P$ waits for at least $n-f$ matching \recoveryprepvote messages, and aggregates these votes into a \preptwocertname, \preptwocert $\coloneqq (v,P,dig,\{$\recoveryprepvote$\}$) 
No two \preptwocertname can exist for the same lane because the necessary quorums intersect in at least one correct replica (which votes only once). For $P$'s lane, all replicas will either have a \preptwocertname for the leader's value or $P$'s value from the racing path but not both.

\par \textbf{Persistence Phase.}

\protocol{\textbf{1: P} $\rightarrow$ \textbf{R}: Replica $P$ broadcasts \recoveryconfirm.}
If $P$ could skip the \recoveryprepphase{} phase, $P$ sets $\preptwocert=\textit{r-input}$. This effectively upgrades its \slowcertname into a \preptwocertname. Otherwise, $P$ uses the \preptwocertname it formed in the \recoveryprepphase{} phase. $P$ broadcasts a message \recoveryconfirm $\coloneqq \langle v,\preptwocert \rangle_{P}$ where $v$ is the view, and $\preptwocert$ is the \preptwocertname $P$ is proposing.

\par \protocol{\textbf{2: R} $\rightarrow$ \textbf{P}:  $R$ receives and acknowledges \recoveryconfirm.}
$R$ checks whether the received \recoveryconfirm is valid.
If the validity checks pass, $R$ stores a copy of the \preptwocertname in $qcs[P]=$\preptwocert. Then, it replies by sending \recoveryconfirmvote $\coloneqq \langle v,dig=h($\preptwocert$) \rangle_R$ back to the proposer.

\protocol{\textbf{3: P} $\rightarrow$ \textbf{R}: Replica $P$ broadcasts \textsc{Finish}.}
$P$ waits for at least $n-f$ \recoveryconfirm messages with matching digests, and aggregates these votes into a \confirmcertname, \confirmcert $\coloneqq (v,dig,\{$\recoveryconfirmvote$\}$). It then forwards the \confirmcertname it formed in a $\textsc{Finish} \coloneqq \langle v,\confirmcert \rangle_P$ message.

\protocol{\textbf{4: R}: Replica $R$ receives quorum of \textsc{Finish} messages.}
For each \textsc{Finish} message, $R$ stores a copy of the \confirmcertname in $qcs[R_i]=\confirmcert$, where $R_i$ is the replica who proposed \confirmcert. Once $R$ receives $n-f$ \textsc{Finish} messages, it starts the lane election step.

\subsubsection{Step 3: Selecting a lane to commit} 
Unlike traditional BFT, which has a single dedicated proposal lane from the leader, \sys{} has multiple lanes, so it is not immediately clear which lane to commit. To mirror traditional BFT’s single-lane behavior, once $n-f$ lanes have completed the persistence phase, \sys{} randomly elects a single lane and checks whether to commit its proposal. If the elected lane has completed the persistence phase, the election succeeds and replicas commit the proposal in that lane. Otherwise, replicas move to the next view and rerun the recovery path. The lane election process must satisfy two properties: \one all correct replicas must agree on who won the election---this is necessary for safety; and  \two more surprisingly, the lucky winner cannot be predicted prior to running the protocol. The latter requirement originates from the potentially adversarial nature of slowdowns: if the "eventual" winner is known prior to its committing, a network adversary could systematically target it. \sys{} achieves both properties by leveraging threshold signatures to elect the winning lane.

\par \textbf{Lane Election.} Each replica $R$ produces a threshold-signature share on the view number, $v$, and broadcasts it. After collecting $2f+1$ valid shares for view $v$, a replica combines them into a single unique threshold signature $\sigma$ and computes the elected lane proposer as
$E \coloneqq h(\sigma)\bmod n$, where $h$ is a cryptographic hash function. Agreement follows because $\sigma$ is unique, so all correct replicas compute the same $E$; unpredictability follows because $E$ cannot be determined before the shares are revealed.

\par \textbf{Commit rule.} Replica $R$ commits if it completed the persistence phase for the elected lane ($qcs[E]$ contains a \confirmcertname). Otherwise, $R$ stops participating in view $v$, advances to the next view, and forwards $\sigma$ to help other replicas derive the same election outcome.

\subsection{Retry Protocol}
If a replica fails to commit in view $0$, it advances to the next view.  In all views $>0$, replicas execute only the recovery path (there is no racing path). The recovery path is identical to the recovery path of view $0$, with one small (but key) difference: the race-exclusion phase is always executed as there could be multiple valid lane inputs to recover. We defer the details of the retry protocol to Appendix \ref{s:retry}.

\subsection{Multi-Shot Agreement}
For multi-shot agreement, \sys{} constructs a totally ordered log of slots, where each slot runs an independent instance of the single-shot protocol. To avoid sequential delays, \sys{} pipelines these instances, so that a slot can begin without waiting on the previous slot to finish. We adopt the same pipelining strategy as Autobahn~\cite{autobahn}, where we start a new slot upon receiving a proposal from the previous slot, and cap the number of concurrent uncommitted slots to a pipelining parameter, $k$.

\subsection{Motorization}
Autobahn~\cite{autobahn} introduced the concept of motorization, which allows any consensus protocol to achieve higher throughput by adopting Autobahn's data layer. \sys{} can use Autobahn's data layer to parallelize the data dissemination process and efficiently utilize all replicas' bandwidth, leading to high throughput.

\subsection{Implementation Notes}\label{s:implementation-notes}


\begin{table}[t]
    \vskip 4pt 
    \centering
    \footnotesize{
    \caption{RTTs between regions (ms)}
    \label{tab:latencyrtt}
    \begin{tabular}{|c|c|c|c|c|}
           \hline RTT & us-east1 & us-east2 & us-west1 & us-west2 \\ \hline
           us-east1 & -  & 20 & 72 & 67 \\
           us-east2 & 21 & - & 57 & 55\\
          us-west1 & 69 & 53 & - & 23\\
          us-west2 & 66 & 49 & 24 & - \\ \hline
    \end{tabular}
    }
\end{table}
\par \textbf{Leader message sharing.}
For liveness, the leader needs to run both a fast-track lane and a replica lane. Instead of sending separate messages for each lane, a single message can be used with a boolean flag indicating the lane it belongs to.

\par \textbf{All-to-all communication.}
The recovery path utilizes linear style communication, where replicas forward their votes to the proposer instead of broadcasting them to all replicas. This reduces communication complexity at the trade-off of increasing latency. To reduce recovery path latency further, Ambulance supports an all-to-all voting strategy to shave off extra message delays.

\par \textbf{Signature aggregation.}
\sys{} uses threshold signatures, but only for the lane election phase. We can also adopt threshold signatures for protocol votes to reduce the communication complexity of \sys{} by a factor of $n$. In practice, this is only necessary for large values of $n$.

%% file: Evaluation/eval-cr.tex
\section{Evaluation}

\begin{table}[t!]
    \vskip 4pt 
    \centering
    \caption{Default timeout values used in production systems}
    \label{tab:timeouts}
    \footnotesize{
    \begin{tabular}{|c|c|}
    \hline
           System & Timeout  \\
           \hline
           CockroachDB~\cite{cockroach-db} & 2 s \\
           Microsoft CCF~\cite{ccf-timeout} & 5 s \\
           Neo4j~\cite{neo4j} & 7 s \\
           TiKV~\cite{tikv-timeout} & 10 s \\
           HotShot~\cite{priv-com-espresso} & 12 s  \\  \hline 
    \end{tabular}
    }
    \vskip 2pt
\end{table}
Our evaluation seeks to answer three questions:
\vspace{-1pt}
\begin{enumerate}
    \item \textbf{Performance}: How well does \sys{} perform in the absence of slowdowns? (\S\ref{s:eval-common-case})
    \item \textbf{Slowdown Robustness}: How well does \sys{} tolerate slowdowns? (\S\ref{s:eval-robustness})         
    \item \textbf{Real-world Performance}: How well does \sys{} perform in production under realistic conditions? (\S\ref{s:eval-production})
\end{enumerate}
\vspace{-1pt}

\begin{figure*}[!th]
\centering
\begin{minipage}{.6\textwidth}
  \centering
  \includegraphics[width=1\linewidth]{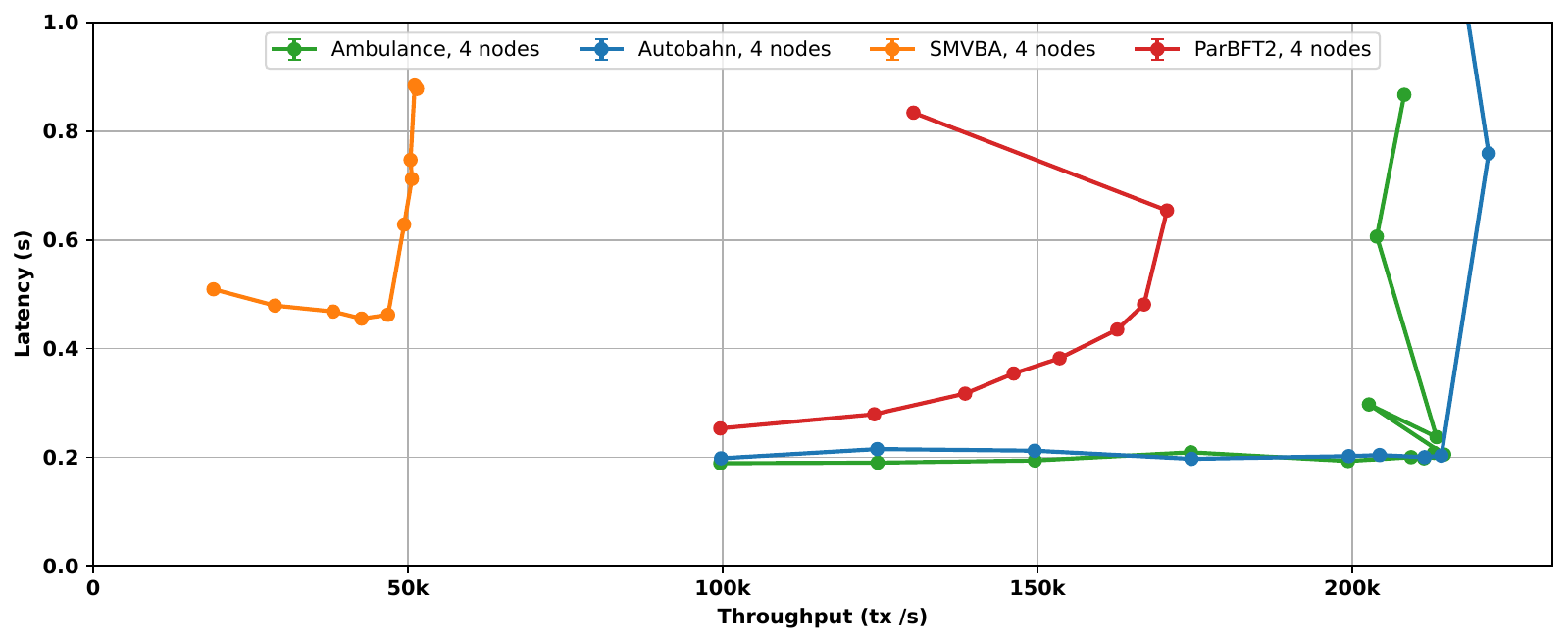}
  \vskip -2pt
  \caption{Throughput and Latency under increasing load.}
  \label{fig:performance}
\end{minipage}
\begin{minipage}{.39\textwidth}
  \centering
  \includegraphics[width=1\linewidth, height=0.62\linewidth]{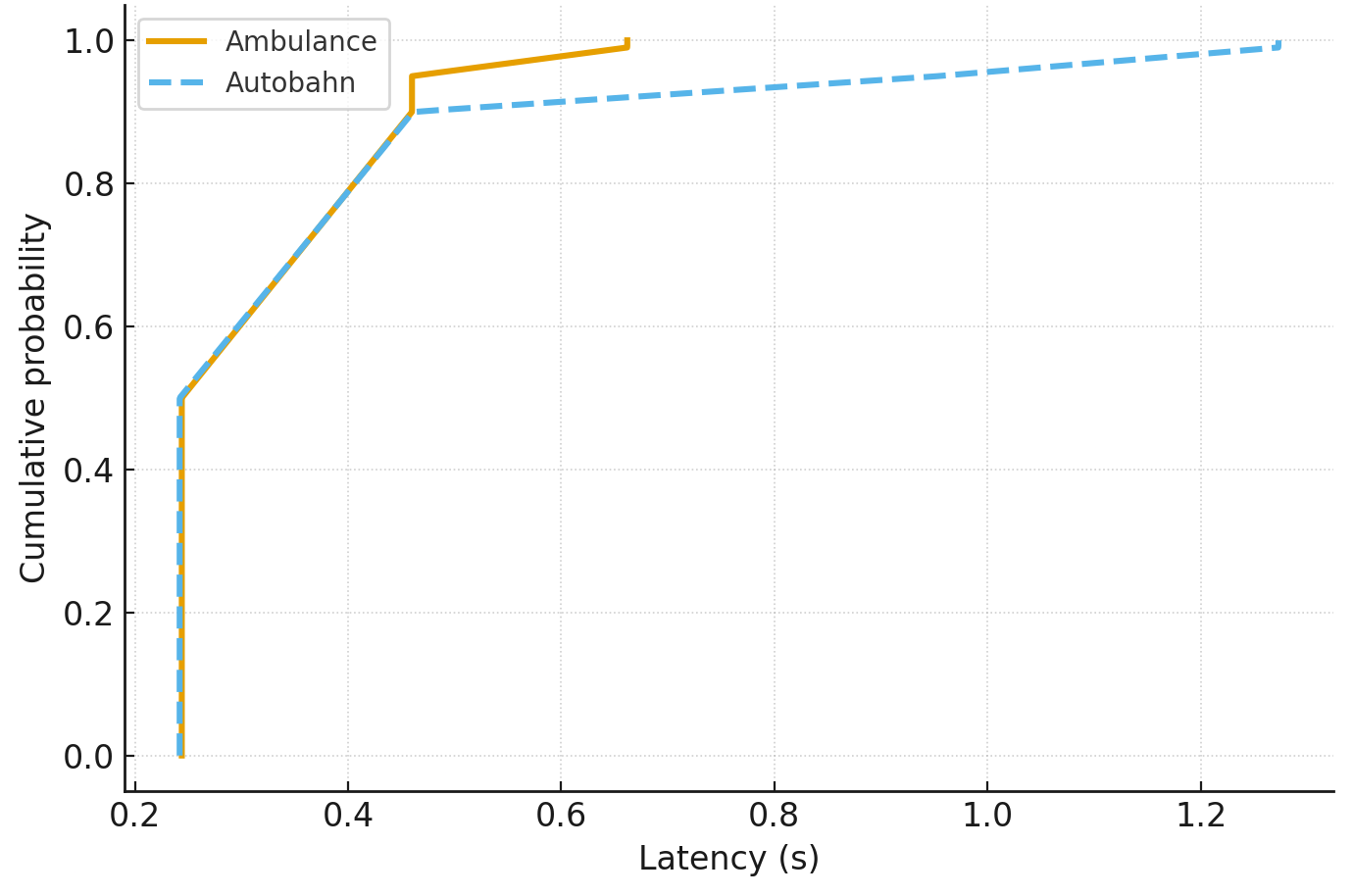}
  \vskip -0pt
  \caption{Production Latency CDF}
  \label{fig:cdf}
\end{minipage}
\vskip -10pt
\end{figure*}

We implemented a prototype of \sys{}\footnote{https://github.com/neilgiri/ambulance-artifact} in Rust, using the same codebase as the open source implementation of Autobahn~\cite{autobahn}. We implement all the notes in \S\ref{s:implementation-notes} except for signature aggregation. 
We use Tokio's TCP~\cite{tokio} for networking, and RocksDB~\cite{rocksdb} for persistent storage of payload data to disk. Finally, we use ed25519-dalek~\cite{ed25519} signatures for authentication. 

\par \textbf{Baseline systems.}
We compare \sys{} against several state-of-the-art BFT protocols: 1) Autobahn~\cite{autobahn}, a leading timeout-based BFT protocol, 
2) ParBFT2~\cite{parbft}, a state-of-the-art hedging-based protocol, as well as 3) SMVBA (Speeding Dumbo)~\cite{speeding-dumbo}, a state-of-the-art pessimistic (asynchronous) protocol. ParBFT2 uses HotStuff~\cite{yin2019hotstuff} as its optimistic path, and after a hedging delay, it launches its pessimistic path, which runs SMVBA combined with an asynchronous binary agreement protocol~\cite{abraham2022efficient}. We use the available open-source codebase each time\footnote{We validated our Autobahn and ParBFT2 results with respective authors}. 
\sys{} and Autobahn share the same data layer design, where all replicas disseminate batches through structured data lanes, enabling parallelized data dissemination. In contrast, ParBFT2 and SMVBA adopt the Batched HotStuff data layer~\cite{danezis2022narwhal}, where replicas optimistically stream batches and consensus leaders propose hashes of any batches they have received.

\sys{} and Autobahn use a PBFT-style multi-shot consensus with a totally ordered log, allowing multiple consensus instances to be pipelined in parallel. This eliminates sequential delays and reduces end-to-end latency. ParBFT2 and SMVBA instead follow the HotStuff~\cite{yin2019hotstuff} chaining approach, which introduces inherent sequential latency. In ParBFT2’s optimistic path, a new consensus instance begins only after the leader collects a quorum of votes for the previous one. In its pessimistic path-and in SMVBA—there is no pipelining at all; consensus instances run strictly sequentially. As a result, requests must wait for the next consensus instance to start, adding further end-to-end latency.

\par \textbf{Evaluation Setup.}
We evaluate all systems on AWS EC2. For sections \S\ref{s:eval-common-case} and \ref{s:eval-robustness}, we evenly distributed nodes in \texttt{us-west1}, \texttt{us-west2}, \texttt{us-east1} and \texttt{us-east2}. We summarize RTTs between regions in Table~\ref{tab:latencyrtt}.
We use machine type \texttt{m6a.4xlarge}~\cite{machineawsmsix} (16 vCPUs and 64GB RAM) with 30GB gp3 disk (SSD) and 12.5GB/s network bandwidth.
Clients are co-located in the same region as the replica, and issue a constant stream of no-op transactions (tx), consisting of 512 random bytes~\cite{danezis2022narwhal, Bullshark}; no-op transactions allow us to stress the consensus system, without risking being bottlenecked on execution. 
We set a batch size of 500KB (1000 transactions) for all baselines, but allow consensus proposals to include/reference more than one batch if available; this \textit{mini-batching} design~\cite{codehotstuff,codebullshark} allows replicas to organically reach larger effective batch sizes with reduced latency trade-off.
For the systems that use leaders, they are elected using a deterministic round-robin scheme.

\subsection{Common Case Performance}\label{s:eval-common-case}
Figure~\ref{fig:performance} presents performance in a fault-free, synchronous setting with $n=4$ nodes. Each experiment runs for 60 seconds under increasing client load. \sys{} reaches a peak throughput of 214k tx/s, matching Autobahn’s peak throughput of 214k tx/s. This high throughput stems from motorization as \sys{} adopts Autobahn’s data layer. \sys makes efficient use of network bandwidth, which dominates overall cost, as consensus messages are small compared to large batched proposals. Although \sys{} introduces multiple consensus lanes, the metadata overhead is negligible in practice, and both \sys{} and Autobahn ultimately bottleneck on data layer work (serializing/deserializing).

\sys{} achieves a latency of 205 ms compared to Autobahn’s 203 ms. In the absence of failures and synchrony, both systems essentially run PBFT, leading to similar latency.



ParBFT2 reaches a peak throughput of 167k tx/s, substantially higher than SMVBA but 1.3x lower than \sys. This gap is primarily due to its weaker data layer, which incurs substantial data-synchronization overhead at high load~\cite{autobahn, danezis2022narwhal}, limiting throughput even when consensus is not the bottleneck. In the good case, ParBFT2's consensus latency is five message delays, compared to three for \sys{}, contributing to its 382 ms latency (1.9x higher than \sys).

SMVBA achieves only 50.6k tx/s peak throughput and exhibits the highest latency of 462 ms (2.3x higher than \sys). It shares the same data layer as ParBFT2, but its low throughput follows directly from its sequential consensus design: consensus instances run strictly one after another, so throughput is constrained by consensus latency. In the good case, SMVBA's consensus latency is six message delays, more than ParBFT2 (five) and \sys{} (three). Together with the additional inclusion latency introduced by this sequential design, these extra message delays yield higher end-to-end latency than \sys.

Overall, \sys{} matches the common-case throughput and latency of state-of-the-art timeout-based BFT (Autobahn) while outperforming pessimistic and hedging-based systems.

\begin{figure*}[!t]
\hspace*{-0.5cm}   
\centering
\begin{minipage}{0.48\textwidth}
  \centering
  \includegraphics[width=1\linewidth]{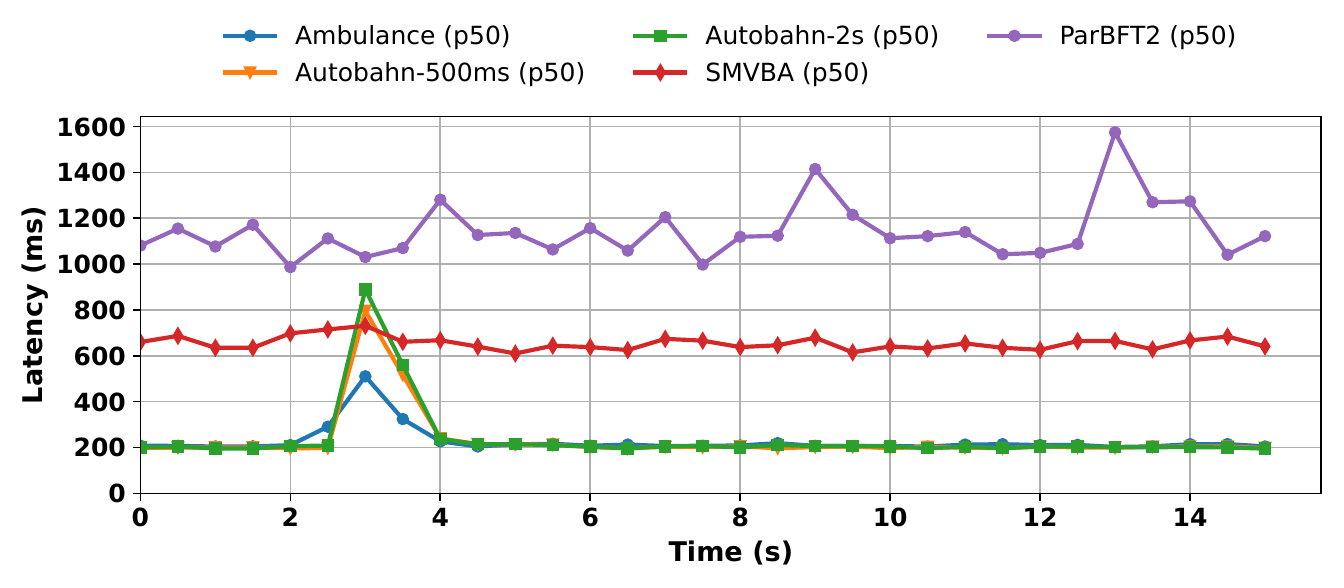}
    \vskip -3pt
  \caption{1s slowdown}
  \label{fig:slowdown-1s}
\end{minipage}
\hspace{4mm}
\begin{minipage}{0.48\textwidth}
   \centering
    \includegraphics[width=1\linewidth]{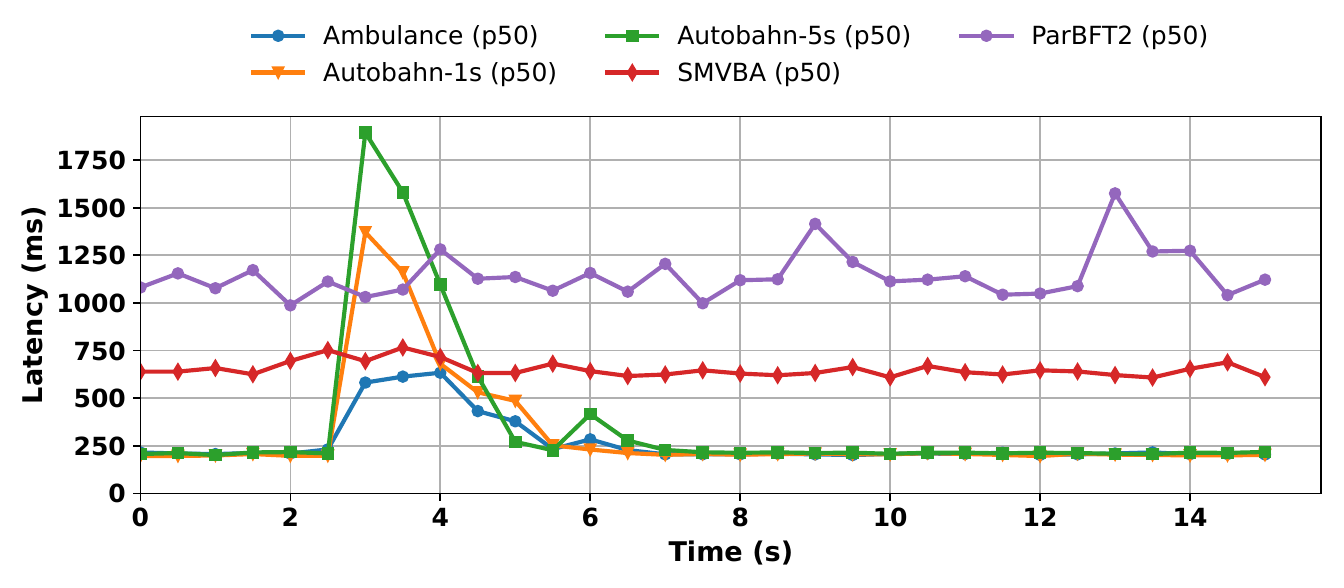}
   \vskip -3pt
   \caption{2s slowdown}
   \label{fig:slowdown-2s}
\end{minipage}

\vskip -10pt
\end{figure*}

\begin{figure*}[!th]
\centering
\begin{minipage}{0.33\textwidth}
  \centering
    \includegraphics[width=1\linewidth]{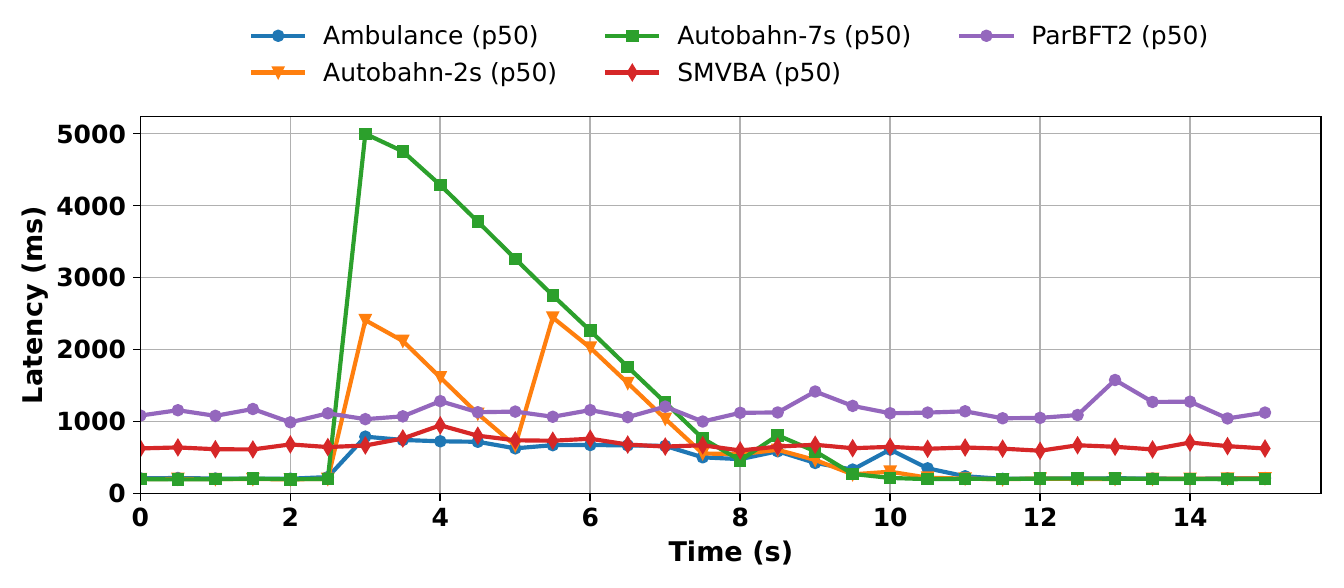}
  \vskip -3pt
  \caption{5s slowdown}
  \label{fig:slowdown-5s}
\end{minipage}
\begin{minipage}{.33\textwidth}
  \centering
  \includegraphics[width=1\linewidth]{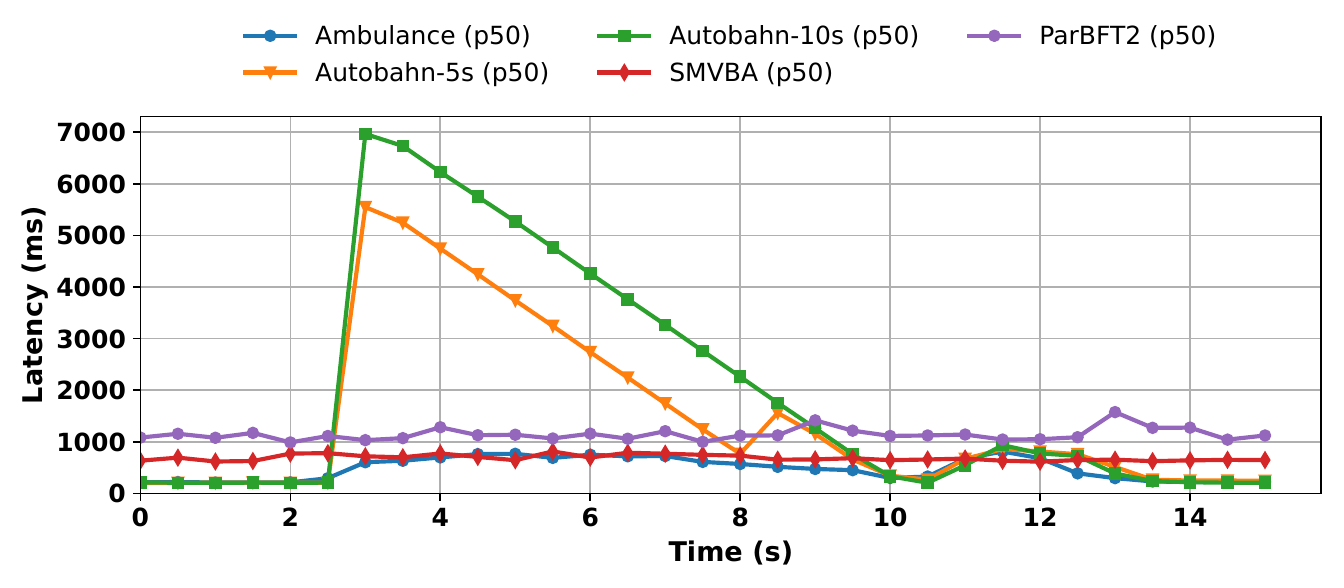}
  \vskip -2pt
  \caption{7s slowdown}
  \label{fig:slowdown-7s}
\end{minipage}
\begin{minipage}{.33\textwidth}
  \centering
  \includegraphics[width=1\linewidth]{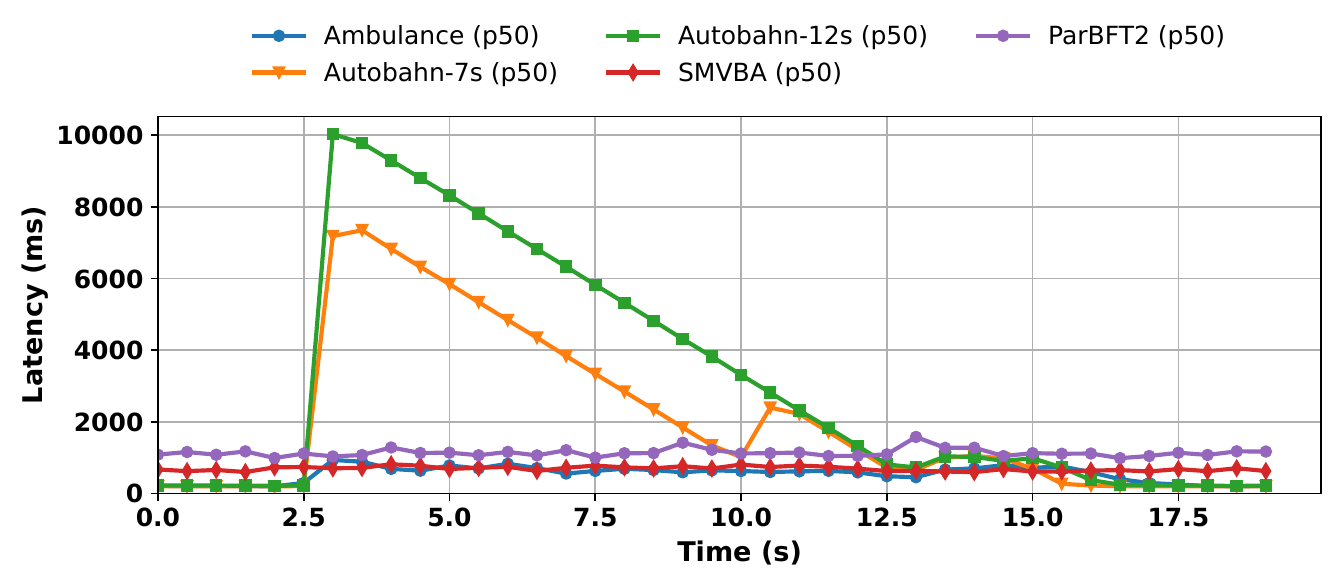}
  \vskip -0pt
  \caption{10s slowdown}
  \label{fig:slowdown-10s}
\end{minipage}
\vskip -10pt
\end{figure*}

\subsection{Slowdown Robustness}\label{s:eval-robustness}
The previous section demonstrated that \sys{} performs well under ideal conditions.
We now examine \sys{}’s tolerance to slowdowns of varying severity in an $n=4$ deployment. At time $t = 3$ seconds, we inject a transient slowdown whose duration we vary across experiments. The slowdown is introduced by pausing all processing (i.e., sleeping) on a single replica.
For ParBFT2, we run only the pessimistic path and do not inject slowdowns, due to bugs in the optimistic–pessimistic path switching logic.\footnote{We are working with the authors of~\cite{parbft} to resolve these issues.} This configuration still captures slowdown latency but does not reflect ParBFT2’s common-case latency, which should closely match Figure~\ref{fig:performance} (382 ms). 

We plot latency over time for slowdown durations of 1, 2, 5, 7, and 10 seconds, where each point is the median latency over a 500 ms window across 5 runs. For Autobahn, we use the timeout values from Table~\ref{tab:timeouts}, which are taken from production deployments of major reliable datastores. 

In the slowdown experiments, we run Autobahn with two different leader-election timeout values. One timeout value is set above the injected slowdown (timeouts of 2, 5, 7, 10, and 12 seconds for slowdowns of 1, 2, 5, 7, and 10 seconds, respectively) so that the plots capture the full effect of the slowdown. The other timeout value is set below the slowdown duration, so that the timeout detects the slowdown. Using a more aggressive timeout for a given slowdown duration caps the peak latency at the timeout value, making it closely match the peak latency of a smaller slowdown. For ParBFT2, we use an aggressive 75 ms hedging delay to minimize slowdown latency, consistent with the maximum RTT in Table~\ref{tab:latencyrtt} (\cite{parbft} specifies a hedging delay of $2\Delta$, equal to the maximum RTT).

\par \textbf{1 second slowdown (Figure~\ref{fig:slowdown-1s}).} All systems are evaluated under a conservative load (\cite{copilot}) to minimize queuing effects (\sys{} and Autobahn at 100k, and ParBFT2 and SMVBA at 10k).
\sys{} has the lowest peak latency at 510.4 ms. This is 1.6x better than Autobahn with a 500 ms timeout (796.1 ms), 1.7x better latency than Autobahn with a 2~s timeout (889.5 ms), 1.4x better than SMVBA (731 ms), and 3.1x better than ParBFT2 (1.58 s). \sys{} promptly detects the slowdown and switches to the recovery path, which takes 7 message delays for the first view. Autobahn, configured with a 2~s timeout (from Table \ref{tab:timeouts}),
does not detect the slowdown, causing latency to rise proportionally to the slowdown duration as replicas block on the slow leader.
Autobahn with a 500 ms timeout detects the slowdown; however, it must elect a new leader and pay the consensus latency cost, so the overall latency ends up being similar to Autobahn with a 2~s timeout.
SMVBA’s consensus latency is similar to \sys{}’s, but its lack of pipelining increases end-to-end inclusion latency. ParBFT2 exhibits the highest latency due to the hedging delay and the requirement to run both SMVBA and an asynchronous binary agreement protocol for its pessimistic path.

\par \textbf{2 second slowdown (Figure~\ref{fig:slowdown-2s}).}
With a 2 second slowdown, \sys{}’s peak latency increases mildly to 633.4 ms, but is still 2.2x better than Autobahn with a 1~s timeout (1.37 s), 3.0x better than Autobahn with a 5~s timeout (1.89 s), 1.2x better than SMVBA (766 ms), and 2.5x better than ParBFT2 (1.58s). 
Longer slowdowns increase the probability that multiple recovery views are needed, raising \sys{} and SMVBA's latency. Autobahn with a 5~s timeout
again fails to detect the slowdown, resulting in latency degradation proportional to the slowdown duration. Autobahn with a 1~s timeout detects the slowdown after 1 second, but must also pay the consensus latency afterwards, so the latency is still higher compared to \sys.
ParBFT2 behaves similarly to the 1 s case.
Figures~\ref{fig:slowdown-5s}–\ref{fig:slowdown-10s}.

\par \textbf{5-10 second slowdowns (Figures~\ref{fig:slowdown-5s}-\ref{fig:slowdown-10s})}
The same trend holds. \sys{} reaches peak latencies of 787.7 ms, 810 ms, and 932 ms for 5, 7, and 10 second slowdowns, respectively, increasing only modestly due to additional failed views. These are still 3.1-6.3x, 6.9-8.6x, and 7.9-10.8x better than Autobahn for 5, 7, and 10 second slowdowns, respectively. SMVBA closely tracks \sys{} at these longer slowdowns (with peak latencies of 812 ms and 945 ms), while ParBFT2 remains at 1.58 s (1.7x worse than \sys).

Across all slowdown durations, \sys{} consistently delivers lower peak latency than Autobahn—1.6x–3.0x better for short (1–2 second) slowdowns and up to 10.8x better for more severe (5–10 second) slowdowns. \sys{} also closely matches SMVBA, the state-of-the-art pessimistic protocol, while achieving 1.7x–3.1x lower latency than ParBFT2, the state-of-the-art hedging protocol.


\subsection{Production Performance}\label{s:eval-production}
We evaluate the performance of \sys{} as deployed in Sei~\cite{marsh2025seigiga}, a production-grade distributed ledger system, and compare it to their current implementation of Autobahn. The deployment consists of $n = 40$ replicas distributed across 20 AWS EC2 regions in North America, South America, Asia, and Europe. Each replica runs on an \texttt{m6i.12xlarge}~\cite{unrelatedmachine} instance (48 vCPUs, 192 GB RAM, 18.75 GB/s network bandwidth).

Figure~\ref{fig:cdf} reports latency over a 24-hour production run. The system experiences slowdowns approximately once every thousand slots, consistent with rates observed in deployed distributed ledger systems~\cite{mystenprivate, priv-com-espresso, marsh2025seigiga}. Both \sys{} and the production Autobahn implementation operate at 180k load (with a peak throughput of 220k) on the same workload; Autobahn uses 2-second timeouts.
\sys{} matches the median latency of Autobahn (244ms vs 242ms), while significantly improving tail performance: its p99 latency of 662 ms represents a 1.92x reduction compared to Autobahn’s 1.27 s. Because the slowdown events dominate the p99, these results show that \sys{} maintains Autobahn’s common-case latency in production while substantially reducing latency during slowdowns.

%% file: Related-Work/related_work.tex
\section{Related Work}

\sys{} is inspired by several prior works. \sys{}'s leader racing path draws from traditional BFT protocols, in particular, PBFT~\cite{castro1999pbft}. The recovery path draws from the VABA (MVBA) family of asynchronous protocols, in particular, 2PAC~\cite{2pac}. \sys{} directly adopts Autobahn's~\cite{autobahn} data layer and pipelining strategy to achieve high throughput and reduce end-to-end latency.

\par \textbf{Slowdown tolerance and failure detectors}
The most common failure detectors~\cite{weakestfd,unreliablefd,timelinessfd} rely on heartbeats or timeouts. These mechanisms are slow to react to failures, and are fundamentally not cooperative nor productive. Other failure detection mechanisms~\cite{notime-async, watchdogs, failure-lorch, falcon-spy} are less generic and inspect the actual processes themselves. These mechanisms can detect more types of failures, including gray failures, but do not easily extend to Byzantine environments.
PeerReview~\cite{peerreview} does detect provably faulty behavior but can neither detect nor correct slow behavior. 
Copilot~\cite{copilot} is the first SMR protocol to introduce slowdown tolerance, limited to the CFT setting. It defines and satisfies the $s$-slowdown-tolerance property, which guarantees performance similar to the no-slowdown case even with $s$ slow replicas. We instead assess slowdown robustness by measuring each protocol’s latency under slowdown, since even Copilot can suffer performance degradation in certain slowdown scenarios~\cite{slowdown-ryan}. 
\par \textbf{Partially synchronous BFT} These protocols~\cite{kotla10zyzzyva, castro1999pbft, simplex, yin2019hotstuff, giridharan2023beegees, gueta2019sbft, giridharan2021no, kauri, stathakopoulou2019mir, gupta2021rcc, spiegelman2022bullshark, sailfish, shoalplusplus, babel2023mysticeti} are optimized for the failure-free case, minimizing latency when there are no slowdowns. Because they rely on timeouts to detect failures, a slow leader causes latency to spike by an amount proportional to the leader timeout or the duration of the slowdown.
\par \textbf{Robust BFT}
Aardvark~\cite{aardvark} monitors leader performance and invokes view changes if the leader does not meet performance thresholds. The system offers only partial protection against slowdowns~\cite{copilot} as it cannot easily detect transient slowdowns. 
Autobahn~\cite{autobahn} is robust to blips, a subset of slowdowns that includes network events such as partial partitions or replica failures that trigger timeouts. It ensures that latency returns to baseline immediately after a blip but allows latency to spike during the blip. 
\par \textbf{Asynchronous BFT} protocols~\cite{mahi-mahi, danezis2022narwhal, vaba, honeybadger} are pessimistic: they assume the leader always loses the race against the clock. There are two main families of asynchronous BFT protocols: the HoneyBadger (BKR) family~\cite{honeybadger, liu2023flexible, dispersed-ledger, aleabft} and the VABA family~\cite{vaba, gao2022dumbo, spiegelman_ace, speeding-dumbo, 2pac}. HoneyBadger-style protocols typically run $n$ instances of reliable broadcast~\cite{bracha1985asynchronous} and $n$ instances of binary agreement to decide which broadcasts completed, while VABA-style protocols run $n$ instances of a leader-based BFT protocol and use a common coin to randomly select one instance to commit. Asynchronous protocols have lower latency than traditional BFT during slowdowns, since they do not require a failure detection mechanism such as timeouts. However, they have much higher common-case latency than traditional BFT protocols, limiting their adoption in practice.

\par \textbf{Hybrid BFT} protocols~\cite{cachin06opt, jolteon-ditto, dumbo-transformer, Bullshark, parbft, abraxas, ipotane} aim to combine the advantages of traditional and asynchronous BFT. Ditto~\cite{jolteon-ditto} and BDT~\cite{dumbo-transformer} run a traditional BFT protocol first and fall back to an asynchronous one when a timeout fires. Relying on timeouts causes them to suffer high latency during slowdowns, particularly when switching protocols. ParBFT~\cite{parbft}, Abraxas~\cite{abraxas}, and Ipotane~\cite{ipotane} instead run one or more asynchronous protocol instances to detect failures of the traditional BFT protocol. Unfortunately, commitment during a slowdown requires still requires completing multiple asynchronous instances.
Hedging~\cite{tennage2023quepaxa, parbft} detects failures by having the leader race against other replicas, but giving the leader a time-based head start. This mechanism is cooperative, but not productive; it stalls recovery until the hedging delay is over.

\par \textbf{Leaderless SMR} protocols~\cite{epaxos, bpaxos, atlas} distribute the ordering task across multiple replicas instead of relying on a single leader. Because no single leader orders all requests, replicas commit dependency graphs that induce an order among instances. This design avoids stalling on a single slow leader, but slowdowns still harm performance: replicas can depend on commands proposed by slow replicas and must wait for those commands to commit before executing. In the worst case, they must invoke an expensive recovery procedure for those commands.

%% file: Conclusion/conclusion.tex
\section{Conclusion}
This paper introduces \sys, a novel BFT protocol, that uses a novel \textit{protocol-rigged racing} mechanism to detect slowdowns. This allows \sys{} to offer the best of both worlds: the good-interval performance of timeout-based systems and slowdown performance comparable to asynchronous protocols.

%% file: Protocol/retry_protocol_appendix.tex
\section{Retry Protocol}\label{s:retry}
\par \textbf{Recovery Input Selection.} 
To determine whether a value may have been committed in prior views, replicas broadcast their local state to others.

\par \protocol{\textbf{1: P} $\rightarrow$ \textbf{R}: Replica $P$ broadcasts \viewchange.}
When $P$ enters view $v$, it broadcasts
its local understanding of whether the previously elected lane committed ($\certdata=qcs[E]$ or $\noelectcert{_P}$ if it did not receive an \elected \preptwocertname): if $\certdata=\noelectcert{_P}$, no operation committed in view $v-1$ according to $P$ (note that, unlike traditional BFT protocols, this says nothing about whether a value committed in a prior view). Otherwise, it forwards  $\certdata=qcs[E]$ to all other replicas as part of a $\viewchange \coloneqq \langle v, \certdata \rangle$ message.

\par \protocol{\textbf{2: P}:  $P$ receives a quorum of valid \viewchange messages.} 
Replica $P$ decides what input to recover based on the $n-f$ valid \viewchange messages it has received. From these, $P$ distinguishes between two cases: 1) a value could have been committed in view $v-1$ 2) a value definitely did not commit in view $v-1$, but may have committed in a prior view.
\par \protocol{\textbf{Case 1}: $P$ receives a \preptwocertname in view $v-1$.}
Upon receiving a \preptwocertname (\preptwocert), $P$ checks that is well-formed and from the elected lane in view $v-1$. Receiving an \preptwocertname signals that a correct replica \textit{may have committed} on the recovery path for view $v-1$.  This follows from quorum intersection: if an \elected \confirmcertname exists, $P$ is guaranteed to receive at least one corresponding \preptwocertname. To preserve safety, $P$ must therefore set its recovery input to the \elected \preptwocertname:
$\textit{r-input} = \preptwocert$.

\par \protocol{\textbf{Case 2}: $P$ receives $n-f$ \noelectcert{}.}
Receiving $n-f$ \noelectcert{} messages definitively indicates that no correct replica committed on the recovery path in view $v-1$. By quorum intersection, if an \elected \confirmcertname had existed, $P$ would necessarily have received an \preptwocertname.

However, $P$ still cannot safely recover an arbitrary value. Receiving $n-f$ \noelectcert{} says nothing about what was decided in \textit{prior views}. If a value was previously committed, $P$ must recover the value and set it as its recovery input. We distinguish between two cases. A correct replica may have committed a value 1) on the racing path 2) on the recovery path in any earlier view $v'<v-1$.

If a value $x$ committed on the racing path, all replicas would have received a \lockcertname in view $0$, and thus set this as their recovery input. From then onwards, all \confirmcertname exchanged as part of higher views will always be for value $x$. 
All \confirmcertname received as part of view $v-1$ will thus necessarily be for value $x$. $P$ can select any as its recovery input.  

If a value $x$ instead committed in view $v'<v-1$, all replicas would have received a \recoveryprepare for $x$ in view $v'+1$, and would have thus set their recovery input to $x$. From then onwards, all \confirmcertname exchanged as part of higher views will always be for value $x$. 
In both cases, $P$ can thus set its recovery input to any \confirmcertname from $v-1$: $\textit{r-input} = \confirmcert$, and aggregates the $n-f$ signatures on \noelectcert{} into a \noelectcertname, \necert.


After selecting its recovery input, $P$ enters the \recoveryprepphase{} phase, using the \noelectcertname{} in place of the \nocommitcert{} (if it exists). From this point forward, $P$ executes the persistence and lane-election phases exactly as in view $0$, with no further differences.


%% file: Proofs/new-proofs.tex
\section{Proofs}\label{s:proofs}

\subsection{Safety}

For safety, correct replicas enforce the following validity rules for messages for a given slot.

\paragraph{Protocol validity rules.}
All rules below are for a fixed consensus instance/slot. Correct replicas only
sign or vote for messages that are valid according to the rules below.

\begin{enumerate}
    \item \textbf{Fast-prepare uniqueness.}
    A correct replica sends at most one \fastprepvote message.

    \item \textbf{No-leader-proposal exclusion.}
    A correct replica never both signs \noleaderprop{} and sends a
    \fastprepvote message.

    \item \textbf{Fast-commit validity.}
    A correct replica sends \fastcommit for digest $d$ only after verifying
    and storing a valid \lockcertname{} for the same digest $d$.

    \item \textbf{No-leader-lock exclusion.}
    A correct replica never both sends \fastcommit and signs \noleaderlock.
    This exclusion is permanent in both directions.

    \item \textbf{Slow-prepare uniqueness.}
    For each proposer $i$ in view $0$, a correct replica sends at most one
    \slowprepvote message.

    \item \textbf{Recovery-prepare vote uniqueness.}
    For each proposer $i$ and view $v$, a correct replica sends at most one
    \recoveryprepvote message.

    \item \textbf{Recovery-prepare vote validity.}
    A correct replica sends a \recoveryprepvote for digest $d$ only after
    verifying a valid \recoveryprepare message for the same proposer, view,
    and digest $d$.

    \item \textbf{View-$0$ recovery-prepare validity.}
    In view $0$, a \recoveryprepare message from proposer $i$ for digest $d$
    is valid only if it contains one of the following:
    \begin{enumerate}
        \item a valid \lockcertname{} for digest $d$; or
        \item a valid \slowcertname{} from proposer $i$ for digest $d$ together
        with a valid \nocommitcert{}.
    \end{enumerate}
    In particular, the digest of the \recoveryprepare must match the digest of
    the supporting certificate.

    \item \textbf{Later-view recovery-prepare validity.}
    For view $v>0$, a \recoveryprepare message for digest $d$ is valid only if
    it adopts one of the following certificates from view $v-1$:
    \begin{enumerate}
        \item an \elected \preptwocertname{} from view $v-1$ with digest $d$; or
        \item a non-\elected \confirmcertname{} from view $v-1$ with digest $d$
        together with a valid \noelectcertname{} for view $v$.
    \end{enumerate}
    Thus, the digest proposed in view $v$ must equal the digest of the
    certificate adopted from view $v-1$.

    \item \textbf{Recovery-confirm validity.}
    A correct replica sends a \recoveryconfirmvote for digest $d$ only after
    verifying a valid \recoveryconfirm message containing a valid
    \preptwocertname{} for the same proposer, view, and digest $d$.

    \item \textbf{Recovery-confirm vote uniqueness.}
    For each proposer $i$ and view $v$, a correct replica sends at most one
    \recoveryconfirmvote message.

    \item \textbf{Election uniqueness.}
    In every view $v$, all correct replicas agree on at most one elected
    proposer. A certificate is \elected in view $v$ only if it originates from
    that proposer.

    \item \textbf{No-election exclusion.}
    A correct replica never both sends a \recoveryconfirmvote for an
    \elected \preptwocertname{} in view $v$ and signs \noelectcert for view
    $v+1$. This exclusion is permanent in both directions.

    \item \textbf{Commit validity.}
    A correct replica commits on the fast path only after receiving a valid
    \leadercommitname{}. A correct replica commits on the recovery path only
    after receiving a valid \elected \confirmcertname{}.

    \item \textbf{Embedded validity.}
    Whenever a correct replica verifies a message containing an embedded certificate, the embedded certificate must be valid and must match the same consensus instance, proposer where applicable, view, and digest claimed by the outer message.
\end{enumerate}

\begin{lemma}\label{lem:leader-lock}
For any two \lockcertname, $P$ and $P'$, $P.dig=P'.dig$
\end{lemma}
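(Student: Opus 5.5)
This is the standard quorum-intersection argument. Fix the slot and view $0$; the leader lane exists only in view $0$. Let $P$ and $P'$ be two valid \lockcertname. By definition, each aggregates at least $n-f = 2f+1$ matching \fastprepvote{} messages from distinct replicas. The votes in $P$ carry digest $P.dig$ and those in $P'$ carry digest $P'.dig$. Validity of the certificates is assumed, per the paper's convention that all signatures and quorums are validated.

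First, let $Q$ and $Q'$ be the signer sets of $P$ and $P'$, with $|Q|,|Q'|\ge 2f+1$. Since there are $n=3f+1$ replicas in total,
\begin{equation*}
|Q\cap Q'| \ge |Q|+|Q'|-n \ge 2(2f+1)-(3f+1) = f+1 .
\end{equation*}
At most $f$ replicas are faulty, so some correct replica $r$ lies in $Q\cap Q'$.

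Second, I apply validity rule 1 (fast-prepare uniqueness): a correct replica sends at most one \fastprepvote{} message in the slot. Unforgeability of signatures, from the model's assumption on the adversary, means $r$'s signed votes in $P$ and $P'$ really were sent by $r$. Hence they are the same message, and its digest satisfies $P.dig = h(prep.val) = P'.dig$.

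The only point that needs care is that both certificates belong to the same consensus instance and the same view. Rule 15 (embedded validity) and the fact that \lockcertname{} are only formed in view $0$ supply this, since they fix the view and slot in each vote. I expect no real obstacle beyond this bookkeeping.
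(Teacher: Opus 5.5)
Your proof is correct and is essentially the paper's argument. Two signer sets of size $n-f$ share a correct replica, and rule 1 (a correct replica casts at most one fast-prepare vote) forces the two digests to be equal. The paper states this as a contradiction rather than a direct argument, and your explicit $f+1$ intersection bound and the view/slot bookkeeping are harmless additions.
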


\begin{proof}
Suppose for the sake of contradiction, $P.dig\neq P'.dig$. $P$ and $P'$ consist of $n-f$ \fastprepvote messages. Two quorums of size $n-f$ intersect in at least one correct replica, which means at least one correct replica sent conflicting \fastprepvote messages with different digests, a contradiction since correct replicas only send one \fastprepvote message.
\end{proof}

\begin{lemma}\label{lem:leader-commit}
For any two \leadercommitname, $C$ and $C'$, $C.dig=C'.dig$
\end{lemma}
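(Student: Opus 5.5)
The plan is to reduce the statement to Lemma~\ref{lem:leader-lock}, using the fast-commit validity rule as the bridge.

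Let $C$ and $C'$ be two \leadercommitname{} certificates. By construction, each is an aggregate of $n-f$ matching \fastcommit{} messages from distinct replicas, carrying digests $C.dig$ and $C'.dig$ respectively. Since at most $f$ replicas are faulty, each certificate contains at least $n-2f = f+1 \geq 1$ \fastcommit{} messages from correct replicas. Pick a correct signer $r$ in $C$ and a correct signer $r'$ in $C'$. Note that we do not need $r = r'$.

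By the fast-commit validity rule, $r$ sent \fastcommit{} for $C.dig$ only after verifying and storing a valid \lockcertname{} $P$ with $P.dig = C.dig$. Likewise, $r'$ holds a valid \lockcertname{} $P'$ with $P'.dig = C'.dig$. Lemma~\ref{lem:leader-lock} gives $P.dig = P'.dig$, and therefore $C.dig = C'.dig$.

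The only step needing care is the claim that a correct signer exists in each certificate. This relies on the certificate containing $n-f$ \emph{distinct} signers and on $n = 3f+1$, which gives $n-f > f$. A Byzantine replica could send a \fastcommit{} without holding any \lockcertname{}, so the argument must go through a correct replica rather than an arbitrary signer.

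An alternative route is to argue that the two quorums intersect in a correct replica. That would only work if correct replicas sent a single \fastcommit{}, and no such uniqueness rule is stated. The route above avoids this by leaning on Lemma~\ref{lem:leader-lock}, which already carries the quorum-intersection argument.
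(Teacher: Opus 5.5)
Your proof is correct and makes the same reduction as the paper: the fast-commit validity rule turns correct signers of $C$ and $C'$ into \lockcertname{} certificates for $C.dig$ and $C'.dig$, and Lemma~\ref{lem:leader-lock} forces those digests to agree. The paper additionally intersects the two quorums to get a single correct replica that signed both certificates; your version shows that step is unnecessary, since one correct signer per certificate ($n-f>f$) suffices. Your closing remark misdescribes that intersection route, though: it needs no \fastcommit{} uniqueness rule, because the paper closes it with Lemma~\ref{lem:leader-lock}, exactly as you do.
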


\begin{proof}
Suppose for the sake of contradiction, $C.dig\neq C'.dig$. $C$ and $C'$ consist of $n-f$ \fastcommit messages. Two quorums of size $n-f$ intersect in at least one correct replica, which means at least one correct replica sent conflicting \fastcommit messages for different digests. A correct replica only sends a \fastcommit message for digest, $dig$, if it receives a \lockcertname also for $dig$, meaning it received two conflicting \lockcertname, a contradiction to lemma \ref{lem:leader-lock}.
\end{proof}

\begin{lemma}\label{lem:leader-lock-commit}
For any \lockcertname, $L$, and any \leadercommitname, $C$, $L.dig=C.dig$.
\end{lemma}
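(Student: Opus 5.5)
The plan is to reduce the statement to Lemma~\ref{lem:leader-lock}, exploiting the fact that a \leadercommitname can only be assembled from \fastcommit messages that were themselves justified by a \lockcertname for the same digest. The argument will be by contradiction, in the same style as the proofs of Lemmas~\ref{lem:leader-lock} and~\ref{lem:leader-commit}.

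First, I fix an arbitrary \lockcertname $L$ and an arbitrary \leadercommitname $C$, and suppose $L.dig \neq C.dig$. The certificate $C$ consists of $n-f = 2f+1$ \fastcommit messages for digest $C.dig$. Since at most $f$ replicas are faulty, at least $f+1 \geq 1$ of these messages come from correct replicas. I pick one such correct replica $R$.

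Next, I apply the \emph{Fast-commit validity} rule: $R$ sends \fastcommit for $C.dig$ only after verifying and storing a valid \lockcertname $L'$ with $L'.dig = C.dig$. We then have two \lockcertname, $L$ and $L'$, with $L.dig \neq C.dig = L'.dig$. This contradicts Lemma~\ref{lem:leader-lock}, so $L.dig = C.dig$.

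I expect no genuine obstacle here, since no quorum intersection between $L$ and $C$ is needed; a single correct member of $C$ suffices. The only point to check is that the validity rule really ties the \fastcommit digest to a \emph{valid} \lockcertname, one consisting of $n-f$ genuine \fastprepvote messages. Otherwise Lemma~\ref{lem:leader-lock} would not apply to $L'$. The \emph{Embedded validity} rule and the standing assumption that all quorums are validated ensure this.
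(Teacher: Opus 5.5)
Your proof is correct and is essentially the paper's argument: assume $L.dig \neq C.dig$, note that correct signers of the fast-commit messages in $C$ must hold a valid lock certificate for $C.dig$ (Fast-commit validity), and derive a contradiction with Lemma~\ref{lem:leader-lock}. The paper phrases this using all $n-2f$ correct signers in $C$, whereas you correctly observe that a single correct signer suffices.
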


\begin{proof}
Suppose for the sake of contradiction, $L.dig\neq C.dig$. By lemma \ref{lem:leader-lock} any two \lockcertname must have the same digest, $L.dig$. $C$ consists of $n-f$ \fastcommit messages, of which at least $n-2f$ are from correct replicas. This means that $n-2f$ correct replicas sent a \fastcommit message for a digest that does not correspond to a valid \lockcertname, a contradiction.  
\end{proof}

\begin{lemma}\label{lem:replica-non-equiv}
For any two \slowcertname, $P$ and $P'$, from proposer $i$ in view $0$, $P.dig=P'.dig$
\end{lemma}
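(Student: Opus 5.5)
The argument follows the same quorum-intersection template as Lemma~\ref{lem:leader-lock}, using the slow-prepare uniqueness rule in place of fast-prepare uniqueness. Suppose, for contradiction, that $P.dig \neq P'.dig$ for two \slowcertname{} $P$ and $P'$ from proposer $i$ in view $0$. By construction, each \slowcertname{} aggregates $n-f$ matching \slowprepvote{} messages from distinct replicas. These votes are tagged with view $0$ and addressed to proposer $i$, so both certificates are built from votes cast in proposer $i$'s lane. Under the embedded validity rule, a valid certificate for proposer $i$ consists only of votes for $i$'s lane in that view, so votes from other lanes cannot be mixed in.

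Next, apply quorum intersection. Two sets of $n-f = 2f+1$ distinct replicas out of $n = 3f+1$ share at least $2(2f+1) - (3f+1) = f+1$ replicas. Since at most $f$ replicas are faulty, at least one replica in the intersection is correct. That correct replica contributed a \slowprepvote{} for $P.dig$ to $P$ and a \slowprepvote{} for $P'.dig \neq P.dig$ to $P'$. It therefore sent two distinct \slowprepvote{} messages for proposer $i$ in view $0$.

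This contradicts the slow-prepare uniqueness rule, which says a correct replica sends at most one \slowprepvote{} per proposer in view $0$. Hence $P.dig = P'.dig$.

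The only point needing care is the scoping of the vote. Uniqueness is per proposer, not global, because each replica votes in every lane. The argument must therefore make sure that the two conflicting votes are attributed to the same proposer $i$, which is exactly what the per-lane binding and embedded validity provide. With that established, the rest is the routine counting argument above.
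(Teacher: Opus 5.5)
Your proof is correct and follows the paper's argument: two $(n-f)$-quorums of votes intersect in a correct replica, which would then have sent two conflicting votes for proposer $i$ in view $0$, contradicting slow-prepare uniqueness. Your extra step tying both sets of votes to proposer $i$'s lane via embedded validity spells out something the paper's proof leaves implicit, since the vote message as defined carries only the view and the digest.
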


\begin{proof}
Suppose for the sake of contradiction, $P.dig\neq P'.dig$. $P$ and $P'$ consist of $n-f$ \slowprepvote messages. Two quorums of size $n-f$ intersect in at least one correct replica, which means at least one correct replica sent conflicting \slowprepvote messages with different digests, a contradiction since correct replicas only send one \slowprepvote message for a given proposer in view $0$.
\end{proof}

\begin{lemma}\label{lem:no-lock-cert}
If a \nolockcert{} forms, then no \lockcertname{} can exist.    
\end{lemma}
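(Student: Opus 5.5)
The plan is a quorum-intersection argument by contradiction, in the same style as Lemma~\ref{lem:leader-lock}. Suppose a \nolockcert{} forms and some \lockcertname{} $L$ also exists.

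The first step is to unpack the two certificates. By construction (Case 2b of Step 1), the \nolockcert{} aggregates $n-f$ signatures on \noleaderprop{} from distinct replicas. Call this set $Q_1$. The \lockcertname{} $L$ consists of $n-f$ matching \fastprepvote{} messages from distinct replicas. Call this set $Q_2$.

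The second step is the counting. Since $|Q_1|+|Q_2| = 2(n-f) = n+f+1$, the two sets intersect in at least $f+1$ replicas. At most $f$ replicas are faulty, so at least one replica $r \in Q_1\cap Q_2$ is correct.

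The third step derives the contradiction. The correct replica $r$ both signed \noleaderprop{} and sent a \fastprepvote{} message. This directly violates validity rule 2 (No-leader-proposal exclusion), which says a correct replica never does both. Hence no \lockcertname{} can exist alongside a \nolockcert{}.

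The only delicate point is ordering. Rule 2 must exclude both orders: a replica that first voted must not later report \noleaderprop{}, and vice versa. The rule as stated is unconditional ("never both"), so the argument needs no timing case split. If I were concerned, I would justify the rule operationally. A replica that sends \fastprepvote{} has stored $\leaderpropname = prep$, so it reports that proposal in \status{} rather than \noleaderprop{}. Conversely, once it signs \noleaderprop{} it has entered recovery and has abandoned the leader lane. I would simply cite rule 2, so I expect no real obstacle.
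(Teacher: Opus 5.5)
Your proof is correct and is essentially the paper's own argument: the $n-f$ signers of the no-lock certificate and the $n-f$ fast-prepare voters must share a correct replica, and that replica would have done both, contradicting validity rule 2. Your explicit count, $2(n-f)-n = f+1 > f$, just states the quorum-intersection step the paper leaves implicit.
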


\begin{proof}
Suppose for the sake of contradiction there exists both a \nolockcert and a \lockcertname. This means a quorum $Q_1$ of size at least $n-f$ signed \noleaderprop. And a quorum $Q_2$ of size at least $n-f$ sent a \fastprepvote message. $Q_1$ and $Q_2$ intersect in at least one correct replica, which signed \noleaderprop and sent a \fastprepvote message. A correct replica will not do both, a contradiction.
\end{proof}

We say that a \preptwocertname is either $n-f$ \recoveryprepvote messages, or a tuple consisting of $n-f$ \slowprepvote messages and a \nolockcert{} (upgraded).

\begin{lemma}\label{lem:prep2-non-equiv}
For any two \preptwocertname, $P$ and $P'$, in view $0$ from proposer $i$, $P.dig=P'.dig$    
\end{lemma}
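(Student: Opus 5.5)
We argue by contradiction, following the template of Lemma~\ref{lem:leader-lock} and Lemma~\ref{lem:replica-non-equiv}. Suppose $P.dig \neq P'.dig$ for two \preptwocertname{} from proposer $i$ in view $0$. By the definition just given, each of $P$ and $P'$ is either a regular certificate ($n-f$ \recoveryprepvote{} messages) or an upgraded certificate (a \slowcertname{} from $i$ together with a \nolockcert{}). We split into three cases according to the form of the two certificates.

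\textbf{Both regular.} $P$ and $P'$ each consist of $n-f$ \recoveryprepvote{} messages for proposer $i$ in view $0$. The two quorums intersect in at least one correct replica. That replica then sent two \recoveryprepvote{} messages for the same proposer and view, which contradicts recovery-prepare vote uniqueness (rule 6).

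\textbf{Both upgraded.} Each certificate carries a \slowcertname{} from proposer $i$ in view $0$, and the digest of each upgraded certificate is the digest of its \slowcertname{}. Lemma~\ref{lem:replica-non-equiv} then forces the two digests to be equal.

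\textbf{One regular, one upgraded.} This is the main case. Say $P$ is upgraded, so a \nolockcert{} exists. By Lemma~\ref{lem:no-lock-cert}, no \lockcertname{} exists.

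Now consider $P'$. Its $n-f$ votes include at least one vote from a correct replica. By rule 7, that replica verified a valid view-$0$ \recoveryprepare{} from $i$ for digest $P'.dig$. By rule 8, that message contained one of two things:
\begin{enumerate}
\item a valid \lockcertname{} for $P'.dig$, which is impossible because no \lockcertname{} exists; or
\item a valid \slowcertname{} from proposer $i$ for $P'.dig$, together with a \nocommitcert{}.
\end{enumerate}
So the second option holds, and there is a \slowcertname{} from $i$ with digest $P'.dig$. Lemma~\ref{lem:replica-non-equiv} applied to this \slowcertname{} and to the one inside $P$ gives $P'.dig = P.dig$, a contradiction.

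The subtle point is that $P'$ need not have been assembled by a correct proposer. We therefore cannot reason about what $i$ did; we can only reason about what a correct voter verified. Rule 15 (embedded validity) is what lets us treat the embedded certificates as genuine and as matching the outer digest. Rules 7 and 8 together are exactly what link a regular certificate back to one of the two race certificates. I expect the remaining work to be checking these rule applications carefully; the quorum arithmetic itself is routine.
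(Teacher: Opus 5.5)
Your proof is correct and follows the paper's argument essentially step for step. It uses the same three-way split on regular versus upgraded certificates. When both are regular, it uses quorum intersection with recovery-prepare vote uniqueness. When both are upgraded, it uses Lemma~\ref{lem:replica-non-equiv}. In the mixed case, it uses Lemma~\ref{lem:no-lock-cert} to rule out the \lockcertname{} branch of a valid view-$0$ \recoveryprepare{}, leaving a \slowcertname{} from $i$ whose digest must agree by Lemma~\ref{lem:replica-non-equiv}. You are slightly more explicit than the paper in the mixed case, routing it through a single correct voter and rules 7, 8, and 15, whereas the paper simply observes that $n-2f$ correct replicas would have voted for an unsupported \recoveryprepare{}.
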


\begin{proof}
Suppose for the sake of contradiction, $P.dig\neq P'.dig$. There are three cases: \one $P$ and $P'$ consist of $n-f$ \recoveryprepvote messages \two $P$ and $P'$ consist of $n-f$ \slowprepvote messages and a \nolockcert{} or \three WLOG $P$ consists of $n-f$ \recoveryprepvote messages and $P'$ consists of $n-f$ \slowprepvote messages and a \nolockcert{}.

\par \textbf{Case \one.} $P$ and $P'$ consist of $n-f$ \recoveryprepvote messages. Two quorums of size $n-f$ intersect in at least one correct replica, which means at least one correct replica sent conflicting \recoveryprepvote messages with different digests, a contradiction since correct replicas only send one \recoveryprepvote message for a given proposer in view $0$.

\par \textbf{Case \two.} By lemma \ref{lem:replica-non-equiv} any two \slowcertname must be for the same digest, a contradiction.

\par \textbf{Case \three.} By lemma \ref{lem:replica-non-equiv}, any \slowcertname from proposer $i$ must have digest = $P'.dig$, and by lemma \ref{lem:no-lock-cert}, there does not exist a \lockcertname. This means that at least $n-2f$ correct replicas sent a \recoveryprepvote for a \recoveryprepare that did not contain a supporting \slowcertname or a \lockcertname, a contradiction since a valid \recoveryprepare must contain either a \slowcertname or a \lockcertname.
\end{proof}

\begin{lemma}\label{lem:confirm-non-equiv}
For any two \confirmcertname, $C$ and $C'$, in view $0$ from proposer $i$, $C.dig=C'.dig$    
\end{lemma}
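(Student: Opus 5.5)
The plan is to follow the template of Lemma~\ref{lem:leader-lock} and Lemma~\ref{lem:prep2-non-equiv}: argue by contradiction, supposing $C.dig\neq C'.dig$ for two \confirmcertname{} $C$ and $C'$ from proposer $i$ in view $0$. Each of $C$ and $C'$ aggregates $n-f$ \recoveryconfirmvote{} messages for proposer $i$ and view $0$. Two such quorums intersect in at least $f+1$ replicas, so at least one correct replica lies in both. That correct replica would then have sent two \recoveryconfirmvote{} messages for proposer $i$ in view $0$ with different digests. This directly contradicts the recovery-confirm vote uniqueness rule, which allows a correct replica at most one \recoveryconfirmvote{} per proposer and view.

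If the intended statement is about the digest of the underlying value, rather than the digest of the embedded \preptwocertname{} (since $dig=h(\preptwocert)$), I would add a second route. By recovery-confirm validity, each correct voter in $C$ verified a valid \recoveryconfirm{} carrying a valid \preptwocertname{} $P$ from proposer $i$ in view $0$ with the matching digest. The same holds for $C'$ with some \preptwocertname{} $P'$. Since each quorum contains at least $n-2f\geq 1$ correct replicas, both $P$ and $P'$ exist as valid \preptwocertname{}s from proposer $i$ in view $0$. Lemma~\ref{lem:prep2-non-equiv} then gives $P.dig=P'.dig$, so the values certified by $C$ and $C'$ coincide.

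The main obstacle is interpreting the digest correctly. The confirm vote hashes the whole \preptwocertname{}, and two distinct certificates for the same value could hash differently. I would therefore present the quorum-intersection argument, which needs only vote uniqueness, as the primary proof. I would then note that the value-level agreement follows via Lemma~\ref{lem:prep2-non-equiv} and embedded validity, so the lemma holds under either reading.
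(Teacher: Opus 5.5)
Your proof is correct, but your primary argument closes the contradiction with a different ingredient than the paper does.

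Both proofs start from the same quorum-intersection step. It yields a correct replica that sent two \recoveryconfirmvote{} messages with different digests for proposer $i$ in view $0$. You finish immediately with the recovery-confirm vote-uniqueness rule. The paper instead invokes recovery-confirm validity: this replica must have verified two \preptwocertname{}s with different digests, which contradicts Lemma~\ref{lem:prep2-non-equiv}.

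Your version is more elementary and self-contained. It uses only intersection and one local voting rule. It does not depend on the chain Lemma~\ref{lem:replica-non-equiv}, Lemma~\ref{lem:no-lock-cert}, Lemma~\ref{lem:prep2-non-equiv}, nor on the view-$0$ recovery-prepare validity rule. The paper's version never uses recovery-confirm vote uniqueness, so it would still go through if correct replicas confirmed several times. It also ties the digest of the \confirmcertname{} directly to the value carried by the underlying \preptwocertname{}, which is the reading the safety theorem needs.

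Your second route is essentially the paper's argument. You are right that it needs no intersection at all, since each quorum already contains a correct voter.

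Your remark that the vote hashes the whole certificate rather than the value exposes a real notational looseness in the paper: the recovery-confirm validity rule silently identifies the two digests. Under the literal reading, your first argument even shows that both certificates vouch for the same \preptwocertname{}. Value agreement then follows by collision resistance, as well as by your second route.
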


\begin{proof}
Suppose for the sake of contradiction, $C.dig\neq C'.dig$. $C$ and $C'$ consist of $n-f$ \recoveryconfirmvote messages. Two quorums of size $n-f$ intersect in at least one correct replica, which means at least one correct replica sent conflicting \recoveryconfirmvote messages for different digests. A correct replica only sends a \recoveryconfirmvote message for digest, $dig$, if it receives a \preptwocertname also for $dig$ (contained in a \recoveryconfirm message), meaning it received two conflicting \preptwocertname, a contradiction to lemma \ref{lem:prep2-non-equiv}.    
\end{proof}

\begin{lemma}\label{lem:prep2-confirm-equal}
For any \preptwocertname, $P$, and \confirmcertname, $C$ in view $0$ from proposer $i$, $P.dig=C.dig$.
\end{lemma}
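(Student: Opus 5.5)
The plan mirrors the proof of Lemma~\ref{lem:leader-lock-commit}, with \preptwocertname{} playing the role of \lockcertname{} and \confirmcertname{} the role of \leadercommitname. Suppose, for contradiction, that $P.dig\neq C.dig$ for some \preptwocertname{} $P$ and some \confirmcertname{} $C$, both in view $0$ from proposer $i$. The contradiction will come from the contents of $C$ together with the uniqueness of \preptwocertname{} digests given by Lemma~\ref{lem:prep2-non-equiv}.

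First, $C$ consists of $n-f$ \recoveryconfirmvote messages for digest $C.dig$. Since at most $f$ replicas are faulty and $n-f>f$, at least $n-2f\geq 1$ of these votes come from correct replicas. Fix one such correct replica $R$.

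Second, by the recovery-confirm validity rule, $R$ sent its \recoveryconfirmvote for $C.dig$ only after verifying a valid \recoveryconfirm from proposer $i$ in view $0$. That message contained a valid \preptwocertname{} $P''$ for the same proposer, view and digest, so $P''.dig=C.dig$. The embedded-validity rule guarantees that $P''$ is a genuine certificate, either $n-f$ \recoveryprepvote messages or the upgraded tuple, rather than a forged object.

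Third, we now have two \preptwocertname, $P$ and $P''$, both in view $0$ from proposer $i$. Lemma~\ref{lem:prep2-non-equiv} gives $P.dig=P''.dig=C.dig$, which contradicts the assumption.

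The only delicate point is the second step: ensuring that the certificate embedded in the \recoveryconfirm is really a \preptwocertname{} in the sense used by Lemma~\ref{lem:prep2-non-equiv}, including the upgraded \slowcertname{} plus \nolockcert{} form. I would handle this by citing the paper's definition of \preptwocertname{} together with the embedded-validity rule. Everything else is direct quorum counting.
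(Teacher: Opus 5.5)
Your proof is correct and follows the paper's argument. A correct replica contributing to $C$ must, by the recovery-confirm validity rule, have verified a valid \preptwocertname{} for $C.dig$, and Lemma~\ref{lem:prep2-non-equiv} then forces $P.dig=C.dig$; the paper states this using the $n-2f$ correct voters, whereas you use a single one, which is all the argument needs.
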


\begin{proof}
Suppose for the sake of contradiction, $P.dig\neq C.dig$. By lemma \ref{lem:prep2-non-equiv} any \preptwocertname must have digest equal to $P.dig$. This means that $n-2f$ correct replicas sent a \recoveryconfirmvote message for a digest not corresponding to a valid \preptwocertname, a contradiction.
\end{proof}

\begin{lemma}\label{lem:prepare-non-equiv}
For any two \preptwocertname $P$ and $P'$, from proposer $i$ in view $v>0$, $P.dig=P'.dig$.
\end{lemma}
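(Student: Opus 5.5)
In views $v>0$ there is no racing path, so the race-exclusion phase always runs. Every \preptwocertname{} from proposer $i$ in view $v$ is therefore an aggregate of $n-f$ \recoveryprepvote{} messages; the "upgraded" form built from $n-f$ \slowprepvote{} messages plus a \nolockcert{} arises only in view $0$. So I would first note, from the definition of \preptwocertname{} given before Lemma~\ref{lem:prep2-non-equiv} and the retry protocol description, that in view $v>0$ both $P$ and $P'$ consist of $n-f$ \recoveryprepvote{} messages for proposer $i$ and view $v$. With that in hand the argument is the quorum-intersection argument of Case \one{} of Lemma~\ref{lem:prep2-non-equiv}.

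Concretely, I would argue by contradiction and assume $P.dig\neq P'.dig$. Each of $P$ and $P'$ is supported by a quorum of $n-f=2f+1$ distinct signers. Two such quorums intersect in at least $n-2f\ge f+1$ replicas, so at least one replica in the intersection is correct. That correct replica would have sent two \recoveryprepvote{} messages for the same proposer $i$ and view $v$ with different digests. This violates the \emph{Recovery-prepare vote uniqueness} validity rule, which is stated per proposer and per view, giving the contradiction.

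The only step needing care is the first: ruling out the upgraded form in views $v>0$. The Later-view recovery-prepare validity rule only admits \recoveryprepare{} messages that adopt a certificate from view $v-1$. The skip of the race-exclusion phase is tied to the view-$0$ \nolockcert{} case. Hence I would state explicitly that a \preptwocertname{} for a view $v>0$ must be a quorum of view-$v$ \recoveryprepvote{} messages, since any upgraded tuple carries view-$0$ \slowprepvote{} messages and is not a valid view-$v$ certificate by the Embedded validity rule. Unlike Lemma~\ref{lem:prep2-non-equiv}, no appeal to validity of the adopted inputs (Lemmas~\ref{lem:leader-lock} or~\ref{lem:confirm-non-equiv}) is needed, because vote uniqueness alone settles it.
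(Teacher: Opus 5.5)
Your proof is correct and matches the paper's argument: in view $v>0$ both certificates are quorums of $n-f$ recovery-prepare votes for proposer $i$, so they share a correct replica, and that replica would have voted twice for the same proposer and view, violating vote uniqueness. The paper simply asserts that $P$ and $P'$ take this form, so your extra step ruling out the upgraded slow-prepare form in views $v>0$ makes explicit something the paper leaves implicit.
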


\begin{proof}
Suppose for the sake of contradiction, $P.dig\neq P'.dig$. $P$ and $P'$ consist of $n-f$ \recoveryprepvote messages. Two quorums of size $n-f$ intersect in at least one correct replica, which means at least one correct replica sent conflicting \recoveryprepvote messages with different digests, a contradiction since correct replicas only send one \recoveryprepvote message per proposer per view.
\end{proof}

\begin{lemma}\label{lem:prepare-confirm-equal}
For any \preptwocertname, $P$, and \confirmcertname, $C$ in view $v>0$ from proposer $i$, $P.dig=C.dig$.
\end{lemma}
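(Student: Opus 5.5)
The argument mirrors that of Lemma~\ref{lem:prep2-confirm-equal}, but uses Lemma~\ref{lem:prepare-non-equiv} in place of Lemma~\ref{lem:prep2-non-equiv}. We argue by contradiction: suppose there is a \preptwocertname{} $P$ and a \confirmcertname{} $C$ from proposer $i$ in view $v>0$ with $P.dig\neq C.dig$.

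First, by Lemma~\ref{lem:prepare-non-equiv}, every \preptwocertname{} from proposer $i$ in view $v$ has digest $P.dig$. Note that in views $v>0$ the race-exclusion phase is always executed, so there is no upgraded form of the certificate. This means no valid \preptwocertname{} for proposer $i$ and view $v$ exists with digest $C.dig$.

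Second, $C$ consists of $n-f$ \recoveryconfirmvote{} messages for digest $C.dig$. Since at most $f$ replicas are faulty, at least $n-2f\geq 1$ of these votes come from correct replicas. By the recovery-confirm validity rule, together with the embedded validity rule, a correct replica sends a \recoveryconfirmvote{} for proposer $i$, view $v$ and digest $C.dig$ only after verifying a valid \recoveryconfirm{} that contains a valid \preptwocertname{} for the same proposer, view and digest. Such a certificate would have digest $C.dig\neq P.dig$, contradicting the first step. Hence $P.dig=C.dig$.

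The only step needing care is confirming that the embedded certificate must bind the same proposer $i$ and view $v$. If it did not, a correct replica might accept a certificate from an earlier view, where Lemma~\ref{lem:prepare-non-equiv} would not apply. The embedded validity rule closes exactly this gap, so I expect no real obstacle.
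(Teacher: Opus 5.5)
Your proof is correct and matches the paper's argument. Both assume the digests differ, use Lemma~\ref{lem:prepare-non-equiv} to force every view-$v$ certificate from proposer $i$ to have digest $P.dig$, and then note that the correct replicas among the $n-f$ signers of $C$ would have voted for a digest with no valid supporting certificate; your appeal to the recovery-confirm and embedded validity rules, binding that certificate to proposer $i$ and view $v$, spells out what the paper leaves implicit.
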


\begin{proof}
Suppose for the sake of contradiction, $P.dig\neq C.dig$. By lemma \ref{lem:prepare-non-equiv} any \preptwocertname from proposer $i$ in view $v$ must have digest equal to $P.dig$. This means that $n-2f$ correct replicas sent a \recoveryconfirmvote message for a digest that does not correspond to a valid \preptwocertname, a contradiction.
\end{proof}

\begin{lemma}\label{lem:commit-no-commit}
If there exists a \leadercommitname, $C$, then a \nocommitcert{} cannot exist.
\end{lemma}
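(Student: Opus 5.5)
We argue by contradiction via quorum intersection, following the pattern of Lemma~\ref{lem:no-lock-cert}. Suppose both a \leadercommitname{} $C$ and a \nocommitcert{} exist. By construction (Case 2a of Step 1), the \nocommitcert{} aggregates $n-f$ signatures on \noleaderlock{}, so some set $Q_1$ of at least $n-f$ replicas signed \noleaderlock. The \leadercommitname{} $C$ consists of $n-f$ \fastcommit{} messages from distinct replicas, so some set $Q_2$ of at least $n-f$ replicas sent \fastcommit.

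Next we intersect the two sets. Since $n=3f+1$, we have $|Q_1\cap Q_2|\ge 2(n-f)-n = f+1$. Hence the intersection contains at least one correct replica $r$. This replica both sent a \fastcommit{} message and signed \noleaderlock{} for the same slot.

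This directly contradicts protocol validity rule~4 (No-leader-lock exclusion): a correct replica never both sends \fastcommit{} and signs \noleaderlock, and the exclusion is permanent in both directions. The contradiction therefore holds regardless of the order in which $r$ performed the two actions. We conclude that no \nocommitcert{} can coexist with a \leadercommitname.

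The only point needing care is the ordering issue. A replica could sign \noleaderlock{} in its \status{} message after a \leadercommitname{} has already formed, or send \fastcommit{} after signing \noleaderlock{}. The ``permanent in both directions'' clause of rule~4 is exactly what makes the argument order-independent, so there is no real obstacle: we cite rule~4 as stated, and do not attempt to derive the exclusion from the operational description (where the \status{} message sends \noleaderlock{} only when no \lockcertname{} was received).
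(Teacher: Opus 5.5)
Your proposal is correct and matches the paper's proof: both intersect the $n-f$ \fastcommit{} senders of $C$ with the $n-f$ signers of \noleaderlock{} to find a correct replica that did both, contradicting validity rule~4. The paper first restricts $C$ to its at least $n-2f$ correct senders and intersects those with the $n-f$ signers, whereas you intersect the full quorums to get $f+1$ common replicas; the two counts are equivalent.
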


\begin{proof}
Suppose for the sake of contradiction, there exists both a \leadercommitname and a \nocommitcert{}. The existence of $C$ implies that at least $n-2f$ correct replicas sent a \fastcommit message. These correct replicas must have stored a \lockcertname locally (in \lockcert) before sending a \fastcommit message. However, the existence of a \nocommitcert{} implies that at least $n-f$ replicas signed the message \noleaderlock. A quorum of $n-2f$ correct replicas intersects any quorum of $n-f$ replicas, in at least one correct replica. This correct replica must have sent a \fastcommit message and signed the message \noleaderlock, a contradiction since correct replicas never both send \fastcommit and sign the message \noleaderlock.
\end{proof}

\begin{lemma}\label{lem:leadercommitprep2}
If there exists a \leadercommitname, $C$, then any \preptwocertname, $P$, from any proposer in view $0$ must have $P.dig=C.dig$   
\end{lemma}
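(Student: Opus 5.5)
Fix a \leadercommitname{} $C$ and an arbitrary \preptwocertname{} $P$ from some proposer $i$ in view $0$. I will split on the two forms a \preptwocertname{} can take, as in the definition preceding Lemma~\ref{lem:prep2-non-equiv}. One form is upgraded: $n-f$ \slowprepvote{} messages together with a \nolockcert{}. The other is a quorum of $n-f$ \recoveryprepvote{} messages.

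\textbf{Upgraded case.} By Lemma~\ref{lem:no-lock-cert}, the \nolockcert{} implies that no \lockcertname{} exists. However, $C$ contains $n-f$ \fastcommit{} messages, so at least $n-2f \ge 1$ of them come from correct replicas. By the fast-commit validity rule, each such replica verified and stored a valid \lockcertname{} for $C.dig$. Hence a \lockcertname{} exists, a contradiction. So this case cannot occur when $C$ exists.

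\textbf{Recovery-prepare case.} $P$ consists of $n-f$ \recoveryprepvote{} messages for $P.dig$, and at least one is from a correct replica. By the recovery-prepare vote validity rule, that replica verified a valid view-$0$ \recoveryprepare{} for digest $P.dig$. By the view-$0$ validity rule, that message carries either (a) a valid \lockcertname{} with digest $P.dig$, or (b) a valid \slowcertname{} for $P.dig$ together with a valid \nocommitcert{}.
\begin{itemize}
\item In subcase (a), Lemma~\ref{lem:leader-lock-commit} gives $P.dig = C.dig$.
\item In subcase (b), the existence of $C$ together with a valid \nocommitcert{} contradicts Lemma~\ref{lem:commit-no-commit}.
\end{itemize}
Thus $P.dig = C.dig$ in every possible case.

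The main obstacle is a mismatch between the protocol text and the validity rules. The protocol text says Case~2a skips nothing and Case~2b upgrades without a recovery-prepare. The validity rule for view-$0$ \recoveryprepare{} lists exactly the two options (a) and (b). I will rely on that rule together with the embedded-validity rule, so that the certificate's digest matches the \recoveryprepare{} digest. I will also rely on the fact that a verified embedded certificate really exists, since it is composed of genuine signatures.
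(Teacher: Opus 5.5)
Your proof is correct and follows the paper's argument. You use the same split into upgraded versus recovery-prepare-vote certificates. Lemma~\ref{lem:no-lock-cert}, together with fast-commit validity, rules out the upgraded case, and the two options of the view-$0$ recovery-prepare validity rule are handled by Lemma~\ref{lem:leader-lock-commit} and Lemma~\ref{lem:commit-no-commit} respectively. The ``mismatch'' you flag is not a real one: upgraded certificates never involve a recovery-prepare message, and your first case already covers them.
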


\begin{proof}
Suppose for the sake of contradiction, $P.dig\neq C.dig$. There are two cases: \one $P$ is a tuple consisting of a \slowcertname and a \nolockcert{} (upgraded) or \two $P$ consists of $n-f$ \recoveryprepvote messages. 

\par \textbf{Case \one.} The existence of $C$ implies that at least $n-2f$ correct replicas sent a \fastcommit message. A correct replica only sends a \fastcommit message if it received a \lockcertname, thus a \lockcertname exists. By lemma \ref{lem:no-lock-cert}, there cannot exist a \lockcertname since $P$ contains a \nolockcert, a contradiction.

\par \textbf{Case \two.} By lemma \ref{lem:commit-no-commit} there cannot exist a \nocommitcert{} and by lemma \ref{lem:leader-lock-commit}, any \lockcertname must have the same digest as a \leadercommitname. However, the existence of $P$ implies that at least $n-2f$ correct replicas sent a \recoveryprepvote message for $P.dig\neq C.dig$. A correct replica will only send a \recoveryprepvote message for $P.dig\neq C.dig$ if the \recoveryprepare message contains a \nocommitcert, a contradiction.
\end{proof}

We say an \elected certificate in view $v$ is a certificate originating from the randomly elected proposer in view $v$. We say a \noelectcertname{} is $n-f$ \noelectcert messages.

\begin{lemma}\label{lem:endorsed-safety}
If there exists an \elected \confirmcertname in view $v$, then there cannot exist a \noelectcertname{} in view $v+1$.
\end{lemma}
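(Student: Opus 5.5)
The argument follows the pattern of Lemma~\ref{lem:commit-no-commit}: quorum intersection combined with the No-election exclusion rule. Suppose, for contradiction, that there is an \elected \confirmcertname{} $C$ in view $v$ and also a \noelectcertname{} $N$ for view $v+1$.

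First, I unpack $C$. By definition it consists of $n-f$ \recoveryconfirmvote{} messages for the elected proposer $E$ of view $v$. At most $f$ of the signers are faulty, so at least $n-2f \geq f+1$ correct replicas sent a \recoveryconfirmvote{} in view $v$ for $E$'s lane. By the Recovery-confirm validity rule, each of these correct replicas first verified a \recoveryconfirm{} carrying a valid \preptwocertname{} from $E$ for view $v$. By the Election uniqueness rule, all correct replicas agree that $E$ is the elected proposer. Hence each such vote counts as a \recoveryconfirmvote{} for an \elected \preptwocertname{} in view $v$.

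Second, I unpack $N$. It consists of $n-f$ signed \noelectcert{} messages for view $v+1$. A set of at least $n-2f$ correct replicas and a set of $n-f$ replicas together account for at least $2n-3f = n+f+1$ memberships among only $n$ replicas. They therefore overlap in at least $f+1$ replicas, so at least one replica in the intersection is correct. This correct replica has both sent a \recoveryconfirmvote{} for an \elected \preptwocertname{} in view $v$ and signed \noelectcert{} for view $v+1$. That violates the No-election exclusion rule, which is permanent in both directions, and gives the contradiction.

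The main obstacle is making sure the vote in $C$ really qualifies as a vote "for an \elected \preptwocertname{}" at the moment it was sent. A replica might vote for $E$'s \recoveryconfirm{} before the lane election has run, at a time when it does not yet know that $E$ will be elected. I would resolve this by reading the exclusion rule as applying to the certificate's identity rather than to the replica's knowledge when voting. Concretely, once the election outcome is known, a correct replica that voted for $qcs[E]$ refuses to sign \noelectcert{}; in the protocol it instead forwards $qcs[E]$ in its \viewchange{}. Together with the election's uniqueness rule, this makes the correspondence sound. I would state this reading explicitly, and if necessary argue it directly from the \viewchange{} rule: a correct replica sets $\certdata=\noelectcert{_P}$ only if it holds no elected \preptwocertname{}, but a voter stored $qcs[E]$ when it voted.
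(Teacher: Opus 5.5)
Your argument is the same as the paper's. The at least $n-2f$ correct voters behind the elected confirm certificate must intersect the $n-f$ signers of the no-election certificate, and the common correct replica violates the No-election exclusion rule; your extra justification of that rule via $qcs[E]$ and the view-change message is a sound supplement that the paper leaves implicit.

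One arithmetic slip: $2n-3f = n+1$, not $n+f+1$ (that is $2n-2f$), so the guaranteed overlap is only one replica. That already suffices, because your first set contains only correct replicas.
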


\begin{proof}
Suppose for the sake of contradiction there exists both an \elected \confirmcertname in view $v$ and a \noelectcertname{} in view $v+1$. The existence of an \elected \confirmcertname, means that at least $n-2f$ correct replicas sent a \recoveryconfirmvote message for an \elected \preptwocertname. However, the existence of a \noelectcertname{} implies that at least $n-f$ replicas signed \noelectcert. $n-2f$ correct replicas intersects a quorum of $n-f$ replicas in at least one correct replica. This means that a correct replica sent both a \recoveryconfirmvote message indicating it received an \elected \preptwocertname in view $v$, and also signed \noelectcert, a contradiction, since correct replicas never do both.
\end{proof}

\begin{lemma}\label{lem:commit-adopt}
If a correct replica commits $dig$ in view $v$ (recovery path), then any \preptwocertname, $P$, in view $v'>v$ must have $P.dig=dig$
\end{lemma}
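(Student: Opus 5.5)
We argue by induction on $v' > v$. Let the committing replica have received an \elected \confirmcertname $C$ in view $v$ with $C.dig = dig$.

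\textbf{Step 1: the elected lane in view $v$.} By Commit validity and Election uniqueness, $C$ originates from the unique elected proposer $E$ of view $v$. $C$ contains $n-f$ \recoveryconfirmvote messages, so at least $n-2f$ correct replicas voted. By Recovery-confirm validity, each of them verified a \preptwocertname from $E$ in view $v$. By Lemma~\ref{lem:prep2-confirm-equal} (if $v=0$) or Lemma~\ref{lem:prepare-confirm-equal} (if $v>0$), every \preptwocertname from $E$ in view $v$ has digest $dig$. Thus every \elected \preptwocertname of view $v$ carries $dig$.

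\textbf{Base case $v'=v+1$.} By Lemma~\ref{lem:endorsed-safety}, no \noelectcertname{} exists for view $v+1$. Take any \preptwocertname $P$ in view $v+1$. Since $v+1>0$, $P$ consists of $n-f$ \recoveryprepvote messages, so some correct replica voted for a valid \recoveryprepare with digest $P.dig$. By Later-view recovery-prepare validity, that \recoveryprepare either adopts an \elected \preptwocertname from view $v$ with digest $P.dig$, or carries a \noelectcertname{} for view $v+1$. The second option is impossible. The first gives $P.dig = dig$ by Step 1.

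\textbf{Inductive step.} Assume every \preptwocertname in view $u$, for some $u \ge v+1$, has digest $dig$. Lemma~\ref{lem:prepare-confirm-equal} then gives that every \confirmcertname in view $u$, elected or not, has digest $dig$. Take any \preptwocertname in view $u+1$. As before, a correct replica voted for a valid \recoveryprepare. That message adopts either an \elected \preptwocertname from view $u$ or a \confirmcertname from view $u$. By Embedded validity and Later-view recovery-prepare validity, its digest equals the adopted certificate's digest, which is $dig$ in both cases. This closes the induction.

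\textbf{Main obstacle.} The delicate point is the base case. The only escape from adopting the elected certificate is the \noelectcertname{} branch, and this is where Lemma~\ref{lem:endorsed-safety} must be applied to the \emph{same} elected lane $E$ that committed.

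One more check is needed in the inductive step: ``every \confirmcertname in view $u$ has digest $dig$'' must hold for all lanes, not only the elected one. It does, because the induction hypothesis covers \preptwocertname{}s from every proposer in view $u$. Lemma~\ref{lem:prepare-confirm-equal} is per-proposer, so it transfers the hypothesis lane by lane. The hypothesis must therefore be stated over all proposers, not just $E$, for this transfer to go through.
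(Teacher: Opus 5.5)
Your proof is correct and follows the paper's argument. Both use induction on $v'$: the base case rules out the no-election branch via Lemma~\ref{lem:endorsed-safety} and fixes the elected lane's certificate in view $v$ to $dig$ via Lemmas~\ref{lem:prep2-confirm-equal} and~\ref{lem:prepare-confirm-equal}, and the inductive step transfers the hypothesis to every confirm certificate of the previous view via Lemma~\ref{lem:prepare-confirm-equal}, which needs your point that the hypothesis quantifies over all proposers (the paper leaves this implicit).
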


\begin{proof}
We prove the lemma by induction.

\par \textbf{Base Case: $v'=v+1$.} Suppose for the sake of contradiction $P.dig\neq dig$. The existence of a \preptwocertname in view $v+1$ implies that at least $n-2f$ correct replicas sent a \recoveryprepvote message for a \recoveryprepare message containing either \one an \elected \preptwocertname from view $v$ or \two a non-\elected \confirmcertname and a \noelectcertname{} for view $v+1$. For case \one, by lemmas \ref{lem:prepare-confirm-equal} and \ref{lem:prep2-confirm-equal} any \elected \preptwocertname in view $v$ must have a digest of $dig$, a contradiction since the \preptwocertname must also be for digest $dig$. For case \two, by lemma \ref{lem:endorsed-safety}, a \noelectcertname{} cannot exist, a contradiction.

\par \textbf{Induction Step.} Assume the lemma holds for all $v'-1$, now consider view $v'$. Suppose for the sake of contradiction $P.dig\neq dig$. The existence of a \preptwocertname in view $v'$ implies that at least $n-2f$ correct replicas sent a \recoveryprepvote message for a \recoveryprepare message containing either \one an \elected \preptwocertname from view $v'-1$ or \two a non-\elected \confirmcertname from view $v'-1$ and a \noelectcertname{} for view $v'$. For case \one, by the induction assumption any \preptwocertname in view $v'-1$ must have $P.dig=dig$, a contradiction. For case \two, by the induction assumption and lemma \ref{lem:prepare-confirm-equal} any \confirmcertname in view $v'-1$ must have digest $=dig$, a contradiction.
\end{proof}

\begin{lemma}\label{lem:commit-prepare}
If there exists a \leadercommitname for digest $dig$, then any \preptwocertname, $P$, in view $v'>0$ must have $P.dig=dig$.
\end{lemma}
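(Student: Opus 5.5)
Induction on $v'$, following the template of Lemma~\ref{lem:commit-adopt}. The goal is to show that every \preptwocertname{} in a view $v'>0$ carries digest $dig$. The key input from view $0$ is Lemma~\ref{lem:leadercommitprep2}: if a \leadercommitname{} for $dig$ exists, every view-$0$ \preptwocertname{} from any proposer has digest $dig$. Lemma~\ref{lem:prep2-confirm-equal} then extends this to every view-$0$ \confirmcertname{}, since each such certificate from proposer $i$ matches the digest of proposer $i$'s \preptwocertname{}.

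\textbf{Base case, $v'=1$.} Suppose $P$ is a \preptwocertname{} in view $1$ with $P.dig\neq dig$. In views $v>0$ a \preptwocertname{} consists of $n-f$ \recoveryprepvote{} messages, so at least $n-2f$ correct replicas voted. By the later-view recovery-prepare validity rule, each of them verified a \recoveryprepare{} for $P.dig$ that adopts one of two things from view $0$:
\begin{itemize}
\item an \elected{} \preptwocertname{} with digest $P.dig$, or
\item a non-\elected{} \confirmcertname{} with digest $P.dig$, together with a \noelectcertname{}.
\end{itemize}
By the embedded validity rule these embedded certificates are genuine. In the first case the certificate is a view-$0$ \preptwocertname{}, so Lemma~\ref{lem:leadercommitprep2} gives digest $dig$, a contradiction. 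In the second case it is a view-$0$ \confirmcertname{}, so Lemma~\ref{lem:leadercommitprep2} combined with Lemma~\ref{lem:prep2-confirm-equal} again gives digest $dig$, a contradiction.

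This step needs one extra observation. Any proposer able to form a \confirmcertname{} in view $0$ must have had a view-$0$ \preptwocertname{}: votes are only cast on a valid \recoveryconfirm{} containing one, by the recovery-confirm validity rule. So Lemma~\ref{lem:prep2-confirm-equal} does apply to that proposer.

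\textbf{Inductive step.} Assume every \preptwocertname{} in view $v'-1\geq 1$ has digest $dig$. By Lemma~\ref{lem:prepare-confirm-equal}, every \confirmcertname{} in view $v'-1$ then also has digest $dig$; as above, its proposer must have held a \preptwocertname{} in that view. Now repeat the base-case argument for a view-$v'$ \preptwocertname{} with digest $\neq dig$. Both adoption options embed a view-$(v'-1)$ certificate whose digest is $dig$, contradicting the digest-matching requirement of the validity rule.

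The main subtlety is the \noelectcertname{} branch. Unlike Lemma~\ref{lem:commit-adopt}, we cannot use Lemma~\ref{lem:endorsed-safety} to rule this branch out, because no recovery-path commit is assumed. We must instead show that every \confirmcertname{}, elected or not, already carries $dig$. That is exactly why the argument routes through Lemma~\ref{lem:leadercommitprep2}, which covers all proposers in view $0$ rather than just the elected one.
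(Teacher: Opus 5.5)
Your proof is correct and follows the paper's argument essentially step for step. You induct on $v'$, settling the base case with Lemma~\ref{lem:leadercommitprep2} (every view-$0$ certificate, of either form, carries $dig$) plus Lemma~\ref{lem:prep2-confirm-equal}, and the inductive step with the hypothesis plus Lemma~\ref{lem:prepare-confirm-equal}, covering both adoption branches of the later-view validity rule; your remark that a proposer forming a confirm certificate must itself hold a matching certificate (by the recovery-confirm validity rule) usefully makes explicit a step the paper leaves implicit, since those two lemmas presuppose such a certificate exists.
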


\begin{proof}
We prove the lemma by induction.

\par \textbf{Base Case: $v'=1$.} Suppose for the sake of contradiction $P.dig\neq dig$. The existence of a \preptwocertname in view $1$ implies that at least $n-2f$ correct replicas sent a \recoveryprepvote message for a \recoveryprepare message containing either \one an \elected \preptwocertname from view $0$ or \two a non-\elected \confirmcertname from view $0$ and a \noelectcertname{}. By lemma \ref{lem:leadercommitprep2} any \preptwocertname (including \elected) in view $0$ must have digest $=dig$. By lemma \ref{lem:prep2-confirm-equal} any \confirmcertname must have digest $=dig$. Thus a correct replica sent a \recoveryprepvote for neither \one or \two, a contradiction.

\par \textbf{Induction Step.} Assume the lemma holds for all $v'-1$, now consider view $v'$. Suppose for the sake of contradiction $P.dig\neq dig$. The existence of a \preptwocertname in view $v'$ implies that at least $n-2f$ correct replicas sent a \recoveryprepvote message for a \recoveryprepare message containing either \one an \elected \preptwocertname from view $v'-1$ or \two a non-\elected \confirmcertname from view $v'-1$ and a \noelectcertname{} for view $v'$. For case \one, by the induction assumption any \preptwocertname in view $v'-1$ must have $P.dig=dig$, a contradiction, since the \recoveryprepare must adopt the same digest $dig$. For case \two, by the induction assumption and lemma \ref{lem:prepare-confirm-equal} any \confirmcertname in view $v'-1$ must have digest $=dig$, a contradiction, since the \recoveryprepare must adopt the same digest $dig$.
\end{proof}

\begin{theorem}[Safety]\label{thm:safety}
No two correct replicas commit different values.    
\end{theorem}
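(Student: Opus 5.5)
The plan is a case analysis on \emph{how} each of the two correct replicas committed. By commit validity (rule 14), every commit is backed either by a valid \leadercommitname{} (fast path, view $0$) or by a valid \elected \confirmcertname{} from some view $v$ (recovery path). Fix two correct replicas that commit digests $d_1$ and $d_2$. We show $d_1=d_2$ in each of three cases: both fast, both recovery, and mixed.

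\textbf{Both commit on the fast path.} The two \leadercommitname{} have equal digests by Lemma~\ref{lem:leader-commit}.

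\textbf{One commits on the fast path, the other on the recovery path in view $v$.} Let $C$ be the \leadercommitname{} with digest $d_1$, and let $C'$ be the \elected \confirmcertname{} in view $v$ with digest $d_2$. Certificate $C'$ originates from some proposer $i$, and we split on $v$.
\begin{itemize}
\item If $v=0$, Lemma~\ref{lem:leadercommitprep2} shows that every \preptwocertname{} in view $0$ has digest $d_1$. In particular, proposer $i$'s \preptwocertname{} has digest $d_1$, and Lemma~\ref{lem:prep2-confirm-equal} gives $C'.dig=d_1$.
\item If $v>0$, Lemma~\ref{lem:commit-prepare} gives that every \preptwocertname{} in view $v$ has digest $d_1$, and Lemma~\ref{lem:prepare-confirm-equal} transfers this to $C'$.
\end{itemize}
One point needs care here. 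We must make sure that a \preptwocertname{} from proposer $i$ in view $v$ actually exists, so that the ``for any \preptwocertname{}'' lemmas have something to bite on. This holds because a \confirmcertname{} contains at least $n-2f$ correct \recoveryconfirmvote{}s, and each correct voter verified a valid \preptwocertname{} for the same proposer, view and digest (rules 10 and 15).

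\textbf{Both commit on the recovery path, in views $v_1\le v_2$.}
\begin{itemize}
\item If $v_1=v_2$, election uniqueness (rule 12) makes both certificates come from the same elected proposer $i$. Lemma~\ref{lem:confirm-non-equiv} (for view $0$) or Lemmas~\ref{lem:prepare-non-equiv} and~\ref{lem:prepare-confirm-equal} (for $v>0$) then give equal digests.
\item If $v_1<v_2$, Lemma~\ref{lem:commit-adopt} applied to the commit in view $v_1$ shows that every \preptwocertname{} in view $v_2$ has digest $d_1$. The existence argument above, combined with Lemma~\ref{lem:prepare-confirm-equal} (valid since $v_2>0$), then yields $d_2=d_1$.
\end{itemize}

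The main obstacle is the \textbf{same-view, view-$0$ recovery subcase}. There, Lemma~\ref{lem:confirm-non-equiv} must be argued through the mixed upgraded and non-upgraded \preptwocertname{} forms handled in Lemma~\ref{lem:prep2-non-equiv}. We also need that both replicas genuinely agree on the elected lane, which relies on the uniqueness of the threshold signature $\sigma$. I would state these dependencies explicitly and otherwise treat the argument as a direct assembly of the earlier lemmas.
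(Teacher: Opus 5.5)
Your proposal is correct and matches the paper's proof: the same three-way split (both fast path, both recovery path, mixed), the same subcases on the view ($v=0$ versus $v>0$, and equal versus unequal views), and the same lemmas invoked at each step. Your explicit remarks are steps the paper uses only implicitly, so they tighten the argument rather than change it: that any elected confirm certificate is backed by a recovery-prepare certificate from the same proposer, view and digest, and that election uniqueness forces same-view commits onto a single proposer.
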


\begin{proof}
Suppose for the sake of contradiction there exists two correct replicas, $R$ and $R'$, which commit $dig$ and $dig'$, where $dig\neq dig'$ respectively. There are three cases: \one $R$ and $R'$ both committed from receiving a \leadercommitname, \two $R$ and $R'$ both committed from receiving an \elected \confirmcertname, and \three WLOG $R$ committed from receiving a \leadercommitname while $R'$ committed from receiving an \elected \confirmcertname.

For case \one, by lemma \ref{lem:leader-commit} any \leadercommitname must be for the same value, a contradiction. For case \two, let view $v$ and view $v'$ be the views in which $R$ and $R'$ committed respectively. If $v=v'$ and $v=0$, then by lemma \ref{lem:confirm-non-equiv}, $dig=dig'$, a contradiction. If $v=v'$ but $v\neq 0$ then by lemmas~\ref{lem:prepare-non-equiv} and~\ref{lem:prepare-confirm-equal}, $dig=dig'$. Otherwise, WLOG let $v<v'$. By lemma \ref{lem:commit-adopt}, any \preptwocertname in view $v'$ must have digest $=dig$, and by lemma \ref{lem:prepare-confirm-equal} any \elected \confirmcertname must have digest $=dig$, a contradiction. For case \three, let view $v'$ be the view in which $R'$ committed. If $v'>0$ by lemma \ref{lem:commit-prepare}, any \preptwocertname certificate in view $v'$ must have digest $=dig$, and by lemma \ref{lem:prepare-confirm-equal} any \elected \confirmcertname in view $v'$ must also have digest $=dig$, a contradiction. Otherwise, if $v'=0$, the by lemmas~\ref{lem:leadercommitprep2} and~\ref{lem:prep2-confirm-equal} have digest $=dig$, a contradiction.
\end{proof}

\subsection{Liveness}

The elected lane in each view is produced by a common coin. For any set $S$ of
lanes fixed before the election output is known, $\Pr[\elected(v)\in S]=|S|/n$.
In particular, if a replica has \confirmcertname certificates for $n-f$
distinct lanes before the elected lane is known, then the elected lane is among
those lanes with probability $(n-f)/n>2/3$.

Termination certificates are globally verifiable. If a correct replica terminates after receiving either a \leadercommitname or an elected \confirmcertname, then before terminating it forwards that certificate to all replicas. Any correct replica that receives a valid termination certificate terminates, regardless of its current view. We also assume eventual delivery of messages between correct replicas.

\begin{lemma}\label{lem:higher-views}
If no correct replica ever terminates and a correct replica, $R$ is in view $v$, then all correct replicas will eventually advance to view $v+1$.
\end{lemma}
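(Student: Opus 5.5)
The plan is to trace a single view step by step and show that no correct replica can block at any waiting point. At every such point a correct replica waits for at most $n-f$ messages, or for one certificate that some correct replica eventually forwards. The standing assumptions are that no correct replica terminates, that messages between correct replicas are eventually delivered, and that there are at least $n-f$ correct replicas. The first step is to show that every correct replica eventually enters view $v$; the rest of the argument then shows that each correct replica in view $v$ eventually advances to $v+1$.

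\textbf{Reaching view $v$.} $R$ entered view $v>0$ only after computing the election signature $\sigma$ for view $v-1$ and finding that $qcs[E]$ held no \confirmcertname for the elected lane. Whenever a replica advances it forwards $\sigma$. A correct replica that is still in view $v-1$ receives $\sigma$, computes the same $E$, and does one of two things.
\begin{itemize}
\item If its $qcs[E]$ holds an \elected \confirmcertname, it commits. That contradicts the hypothesis that no correct replica terminates.
\item Otherwise it advances, exactly as $R$ did.
\end{itemize}
Forwarding $\sigma$ also pulls along any correct replica that lags earlier in view $v-1$, so every correct replica eventually reaches view $v$. If a fully rigorous argument is needed, one can add an outer induction on views below $v$.

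\textbf{Progress within view $v$.} I would walk through each waiting point in order.
\begin{itemize}
\item \textbf{View 0 race cutoff.} Every correct proposer receives \slowprepvote{} messages from all correct replicas and so forms a \slowcertname. Hence every correct replica eventually holds $n-f$ \slowcommit{} messages and reaches the cutoff, even if the leader is silent.
\item \textbf{\status{} / \viewchange{} step.} Each correct replica broadcasts its message, so every correct replica receives $n-f$ of them. From those $n-f$ messages it can always pick a recovery input under one of the listed cases.
\item \textbf{Completeness of the case split.} The split must be exhaustive. In view $0$, if no \lockcertname{} appears among the $n-f$ \status{} messages, then all of them carry \noleaderlock{}, which yields a \nocommitcert{}; that is Case 2a or 2b. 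In later views, either some \viewchange{} carries an \elected \preptwocertname{}, or all $n-f$ are \noelectcert{}, which yields a \noelectcertname.
\item \textbf{Race-exclusion and persistence votes.} Correct replicas will vote on a correct proposer's \recoveryprepare{} and \recoveryconfirm{}, because the embedded certificates are valid under the validity rules. Each correct proposer therefore completes both phases and sends \textsc{Finish}, so every correct replica collects $n-f$ \textsc{Finish} messages.
\item \textbf{Election.} All correct replicas then release threshold shares, so $\sigma$ forms. Since no one terminates, each correct replica advances to $v+1$.
\end{itemize}

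\textbf{Main obstacle.} I expect the hardest part to be showing that correct replicas never refuse a vote that a correct proposer needs. The danger is the exclusion rules: vote uniqueness and the permanent exclusions such as \noleaderlock{} versus \fastcommit. The worry is a correct replica that signed \noleaderlock{} and is then asked to vote for a recovery prepare that carries a \lockcertname. I would check that no rule blocks this vote. Rule 4 forbids only sending \fastcommit, and one vote per proposer per view is enough, since each correct proposer sends a single proposal. I would also argue that a replica with a stale local state, such as having already sent \fastcommit, simply skips signing \noleaderlock{}. That does no harm, because the $n-f$ messages it waits for can then include a \lockcertname, which puts the proposer in Case 1.
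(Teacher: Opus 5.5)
Your proposal is correct at the paper's level of rigor. Its within-view part is the same phase-by-phase walk the paper uses: cutoff via the correct replicas' slow certificates, $n-f$ status or view-change messages, race-exclusion and persistence votes, $n-f$ \textsc{Finish} messages, and $2f+1$ threshold shares.

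The real difference is how you bring every correct replica into view $v$. The paper inducts on $v$: $R$ passed through view $v-1$, so the hypothesis for $v-1$ puts all correct replicas in view $v$, and the same walk is repeated. You instead use the forwarded election signature $\sigma$ to pull laggards through each earlier view, with induction serving only as bookkeeping.

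You also explicitly check two things the paper leaves implicit. First, the recovery-input case split is exhaustive. Second, no permanent exclusion rule (such as fast-commit versus no-leader-lock) ever makes a correct replica refuse a vote that a correct proposer needs.

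The paper's route is shorter, but it tacitly assumes every correct replica completes every phase of view $v-1$ on its own. That conflicts with the rule that a replica stops participating in a view once it advances. A laggard's lane may then never gather $n-f$ votes, and the laggard may never see $n-f$ \textsc{Finish} messages. Forwarded $\sigma$ is exactly the mechanism that rescues such a replica, so your route is the more robust one.

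To make it airtight, state your within-view step as a dichotomy. The blanket claim that all correct replicas vote for every correct proposer is false for replicas that have already left view $v$. Argue instead as follows.
\begin{itemize}
\item If some correct replica eventually advances past $v$, its forwarded $\sigma$ makes every other correct replica either commit (excluded by hypothesis) or advance.
\item If none ever does, all correct replicas keep participating in view $v$. Your phase-by-phase argument then goes through and they all obtain $\sigma$ anyway, a contradiction.
\end{itemize}

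Also note that a replica that jumps into view $v$ via a forwarded $\sigma$ can still form a Case 2 retry input, which needs a confirm certificate from view $v-1$. Forming $\sigma$ required a share from at least one correct replica that had seen $n-f$ \textsc{Finish} messages. At least $f+1$ of those came from correct proposers, who broadcast them to everyone, so the laggard eventually holds such a certificate.
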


\begin{proof}
We prove the lemma by induction on $v$.

\par \textbf{Base Case: $v=0$.} Suppose for the sake of contradiction some correct replica never enters view $1$. All correct replicas must have entered view $0$ and started the race. Since there are at least $n-f$ correct replicas, and correct replicas only propose valid values, eventually all correct replicas will form a \ulostcert. This will trigger sending a \status message, so all correct replicas will eventually receive $n-f$ \status messages. After receiving $n-f$ \status messages, a correct replica either obtains a
\nolockcert{} and skips \recoveryprepphase{}, or enters
\recoveryprepphase{} and eventually receives at least $n-f$
\recoveryprepvote messages. It then proceeds to persistence phase. It will eventually receive $n-f$ \recoveryconfirmvote messages and complete the persistence phase. All correct replicas will eventually complete the persistence phase, so all correct replicas will eventually receive $n-f$ \confirmcertname and send an \textsc{Elect-Lane} message. Thus every correct replica eventually receives at least $n-f\ge 2f+1$
\textsc{Elect-Lane} messages. Since no correct replica ever terminates, every
correct replica advances to view $1$, a contradiction.

\par \textbf{Induction Step.}
Assume the lemma holds for all views up to $v-1$, and consider view $v$.
Since $R$ is in view $v$, it previously entered view $v-1$. By the induction
hypothesis applied to view $v-1$, all correct replicas eventually advance to
view $v$. All correct replicas will send a \viewchange message at the start of view $v$ and eventually receive $n-f$ \viewchange messages. All correct replicas will then eventually complete the \recoveryprepphase{} and persistence phases in view $v$ since there are at least $n-f$ correct replicas. All correct replicas will thus receive $n-f$ \confirmcertname and send an \textsc{Elect-Lane} message. Thus every correct replica eventually receives at least $n-f\ge 2f+1$
\textsc{Elect-Lane} messages. Since no correct replica ever terminates, every
correct replica advances to view $v+1$.
\end{proof}

\begin{lemma}\label{lem:constant-prob}
If no correct replica terminates before view $v$ and some correct replica enters
view $v$, then some correct replica terminates in view $v$ with probability at
least $(n-f)/n > 2/3$.
\end{lemma}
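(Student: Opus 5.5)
Since no correct replica terminates before view $v$ and some correct replica enters view $v$, Lemma~\ref{lem:higher-views} (applied to view $v-1$ when $v>0$, or directly for $v=0$) gives that every correct replica eventually enters view $v$. The argument then follows the liveness proof of Lemma~\ref{lem:higher-views} inside view $v$, but stops at the first correct replica that gathers $n-f$ \textsc{Finish} messages. Call this replica $R$. First, every correct replica chooses a recovery input and finishes the (possibly skipped) \recoveryprepphase{} and persistence phases for its own lane. This uses the same quorum counting as in Lemma~\ref{lem:higher-views}: there are at least $n-f$ correct voters, and by the validity rules correct replicas accept the inputs that correct proposers build (Case 1/2a/2b in view $0$, and Case 1/2 of the retry protocol in later views). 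So $R$ eventually holds $n-f$ \confirmcertname certificates, one per distinct lane, stored in $qcs$. Let $S$ be the set of these lanes.

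The key step is to show that $S$ is fixed before the coin for view $v$ becomes predictable. A correct replica releases its threshold share on $v$ only after it has received $n-f$ \textsc{Finish} messages. Reconstructing $\sigma$ takes $2f+1$ shares, and at least $f+1$ of them must come from correct replicas. So before $\sigma$, and hence $E=h(\sigma)\bmod n$, can be known to anyone, including the adversary, at least one correct replica has already fixed its set of $n-f$ finished lanes. I would choose $R$ to be the first correct replica to release its share. Its set $S$ is then determined before any correct share has been released, and therefore before the adversary has any information about the coin. The common-coin assumption stated at the start of the liveness subsection then gives $\Pr[E\in S]=|S|/n=(n-f)/n>2/3$.

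When $E\in S$, $qcs[E]$ at $R$ holds an \elected \confirmcertname, so by the commit rule $R$ commits in view $v$ and terminates, after forwarding the certificate. This proves the lemma.

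The main obstacle is the independence step. Other correct replicas' sets may be shaped adaptively, and the adversary controls the scheduling, so I have to argue that the specific set $S$ at $R$ is determined by the execution prefix ending at $R$'s own share release. The plan is to condition on that prefix and invoke the coin's unpredictability: with fewer than $f+1$ correct shares released, no $2f+1$ shares exist, so $\sigma$ cannot be computed. A secondary detail is making sure $R$ does not leave view $v$ before it evaluates the commit rule. This holds because $R$ evaluates the rule as soon as it computes $E$, and a replica advances views only after failing that check.
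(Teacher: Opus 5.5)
Your proposal is correct and follows the same route as the paper's proof. A progress argument produces a correct replica $R$ holding \confirmcertname{} certificates for a set $S$ of $n-f$ distinct lanes fixed before the coin is known; the common-coin property gives $\Pr[E\in S]=(n-f)/n$; and the commit rule then makes $R$ terminate. Your choice of $R$ as the first correct replica to release its share (so that fewer than $2f+1$ shares exist when $S$ is fixed) spells out the independence step that the paper only asserts.

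The one thing to add is the paper's opening case split. If some correct replica already terminates in view $v$ before the election step (e.g.\ on a \leadercommitname{} in view $0$), the claim holds trivially. Only in the other case does your progress argument apply, and Lemma~\ref{lem:higher-views} as stated assumes no correct replica ever terminates, so you cannot cite it directly without this split.
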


\begin{proof}
If some correct replica terminates in view $v$ before the election step, then
the claim already holds. Otherwise, no correct replica terminates before the
election step of view $v$.

By the same progress argument as in Lemma~\ref{lem:higher-views}, unless some
correct replica terminates earlier in view $v$, correct replicas eventually
complete the required phases of view $v$. Hence, some correct replica
$R$ eventually receives $n-f$ \confirmcertname certificates for $n-f$ distinct
lanes before the elected lane is known. Let $S$ be this set of lanes.

By the common-coin assumption, $\Pr[\elected(v)\in S]=(n-f)/n > 2/3$. If
$\elected(v)\in S$, then $R$ has a valid \confirmcertname for the elected lane,
so $R$ terminates in view $v$. Therefore, some correct replica terminates in
view $v$ with probability at least $(n-f)/n > 2/3$.
\end{proof}

\begin{lemma}\label{lem:totality}
If a correct replica, $R$, terminates in view $v$, then all correct replicas eventually terminate.
\end{lemma}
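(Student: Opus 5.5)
The plan rests on the forwarding rule stated at the start of the liveness subsection. Termination certificates are globally verifiable. A correct replica that terminates, after receiving either a \leadercommitname or an \elected \confirmcertname, first forwards that certificate to all replicas. Any correct replica that receives a valid termination certificate terminates, whatever its current view. Together with eventual delivery between correct replicas, this should give totality almost directly.

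First, I would case-split on how $R$ terminated in view $v$.

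\textbf{Case 1: fast path.} If $R$ terminated because it received a \leadercommitname $C$, it forwards $C$ to every replica before terminating. By eventual delivery, each correct replica $R'$ eventually receives $C$. Since $C$ consists of $n-f$ signed \fastcommit messages, $R'$ can verify it regardless of its local state, and therefore terminates. By Theorem~\ref{thm:safety}, it commits the same value as $R$.

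\textbf{Case 2: recovery path.} If $R$ terminated because it held an \elected \confirmcertname for the elected lane $E$ of view $v$, it forwards that certificate together with the threshold signature $\sigma$ for view $v$. The commit rule already requires $\sigma$ to be forwarded. A correct replica $R'$ that receives both can recompute $E = h(\sigma)\bmod n$ and check that the certificate originates from $E$ and carries $n-f$ matching \recoveryconfirmvote messages. It then terminates, even if it is still in an earlier view or has already moved past view $v$.

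The main point needing care is that the forwarded certificate really is self-verifying for a replica in a different view. This is most delicate in the recovery case, where the \elected status depends on the coin output, so I would make explicit that $\sigma$ travels with the certificate. By uniqueness of threshold signatures, $\sigma$ yields the same $E$ at every replica. Beyond that, the argument is a straightforward appeal to reliable channels and the forwarding rule, and it requires no synchrony.
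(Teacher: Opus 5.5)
Your proposal is correct and matches the paper's proof: both split on whether $R$ terminated via a leader commit certificate or an elected confirm certificate, and then conclude from the forward-before-terminating rule, global verifiability, and eventual delivery. The paper phrases this as a contradiction, which is only cosmetic. One small correction: the commit rule forwards $\sigma$ only in the non-commit branch, so sending $\sigma$ along with the certificate is your own way of making the paper's ``globally verifiable'' assumption concrete, not something the commit rule already requires.
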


\begin{proof}
Suppose for contradiction, a correct replica $R'$ never terminates. If $R$ received a \leadercommitname it forwards it to all correct replicas before terminating. $R'$ will thus receive it and also terminate, a contradiction. If $R$ received an \elected \confirmcertname in view $v$, it will also forward it to all correct replicas before terminating. $R'$ will thus receive it and also terminate, a contradiction.     
\end{proof}

\begin{theorem}[Liveness]\label{thm:liveness}
\sys{} terminates with probability $1$.
\end{theorem}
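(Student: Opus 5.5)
We combine Lemmas~\ref{lem:higher-views}, \ref{lem:constant-prob}, and~\ref{lem:totality} in a geometric-trials argument. The goal is to show that the event ``no correct replica ever terminates'' has probability $0$. Once some correct replica terminates, Lemma~\ref{lem:totality} gives that all correct replicas eventually terminate, which is the statement of Theorem~\ref{thm:liveness}.

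\textbf{Step 1: the views keep advancing.} Let $N_v$ denote the event that no correct replica terminates in any view $\le v$. Conditioned on no correct replica ever terminating, Lemma~\ref{lem:higher-views} applied inductively shows that every correct replica enters every view $v$. So on this event the protocol passes through infinitely many views, and in each one some correct replica enters the view. This is exactly the hypothesis Lemma~\ref{lem:constant-prob} needs.

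\textbf{Step 2: each view succeeds with constant probability.} Lemma~\ref{lem:constant-prob} then gives
\[
\Pr[N_v \mid N_{v-1}] \le 1-\frac{n-f}{n} < \frac{1}{3},
\]
which is the step needing care. The bound must hold conditionally on the entire history up to the election of view $v$, since the adversary's scheduling may depend on it. This is where the common-coin property matters: the set $S$ of lanes for which $R$ holds \confirmcertname certificates is fixed before the threshold signature on $v$ is revealed. Since $f$ replicas cannot forge $\sigma$ and $\sigma$ is unpredictable, $h(\sigma)\bmod n$ is independent of $S$ given the past. So the bound $(n-f)/n$ holds conditionally on any history consistent with $N_{v-1}$, and not merely on average. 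I would state this explicitly, relying on the liveness preamble's assumption that $\Pr[\elected(v)\in S]=|S|/n$ for any $S$ fixed beforehand.

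\textbf{Step 3: conclude.} Chaining the conditional bounds gives $\Pr[N_v] \le (1/3)^{v+1}$. Hence
\[
\Pr\Big[\bigcap_v N_v\Big] = \lim_{v\to\infty}\Pr[N_v] = 0,
\]
so with probability $1$ some correct replica terminates in some finite view. Lemma~\ref{lem:totality} then lifts this to all correct replicas, using reliable channels and eventual delivery. The result also gives a constant expected number of views, at most $n/(n-f) < 3/2$, though this is not needed for the theorem.
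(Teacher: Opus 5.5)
Your proposal is correct and follows the paper's proof: Lemma~\ref{lem:higher-views} keeps views advancing, Lemma~\ref{lem:constant-prob} gives each view a termination probability of at least $(n-f)/n$, a geometric tail bound sends the non-termination probability to $0$, and Lemma~\ref{lem:totality} extends termination to all correct replicas. Your explicit point that the per-view bound must hold conditionally on the full history, because $S$ is fixed before the coin is revealed, tightens a step the paper states only briefly, and your remark on the expected number of views is a correct bonus; neither changes the route.
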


\begin{proof}
By Lemma~\ref{lem:higher-views}, if no correct replica terminates, then correct
replicas keep entering higher views. Let $p=(n-f)/n > 2/3$. By
Lemma~\ref{lem:constant-prob}, conditioned on no correct replica terminating
before a view, if no correct replica has terminated before that view, some correct replica
terminates in that view with probability at least $p$. Therefore, by a geometric tail bound, the probability that no correct
replica terminates in the first $k$ views is at most $(1-p)^k$, which converges
to $0$ as $k\to\infty$. Hence some correct replica terminates with probability
$1$. By Lemma~\ref{lem:totality}, once one correct replica terminates, all
correct replicas eventually terminate. Therefore, \sys{} terminates with
probability $1$.
\end{proof}

%% file: Tech-Discussion/tech-discussion.tex
\section{Additional Technical Discussion}

By default, the race gives the leader a one-message-delay advantage over the other replicas: the leader can commit in two message delays, whereas non-leader replicas require three. In some deployments, such as LAN environments, this advantage may be insufficient because communication delays between replicas are extremely small. As a result, a non-slow leader may be mistakenly classified as slow, causing replicas to trigger recovery more often than necessary.

To increase the leader's advantage, \sys{} can insert a configurable number of dummy phases before non-leader replicas enter the race. In each dummy phase, replicas send a message containing the phase number and wait for $n-f$ matching messages before advancing to the next phase. Alternatively, \sys{} can introduce a time-based delay, similar to a hedging delay, to further bias the race in favor of the leader. Although dummy phases or time-based delays may may seem at odds with our productivity requirement, replicas still use this time to generate non-equivocation certificates for recovery. By contrast, traditional timeouts and hedging delays do not perform useful protocol work while waiting.

Dummy phases are particularly useful when motorizing \sys. Since coverage (new data layer proposals) to start a slot may be achieved at different times across replicas, the leader may begin a slot later than other replicas, and therefore be detected as slow even when it is not. To mitigate this, replicas can forward their proposals to the leader, allowing the leader to satisfy coverage as well. Because this forwarding step takes one message delay, \sys{} can add one corresponding dummy phase to compensate for the added delay.